\algrenewcommand\algorithmicrequire{\textbf{Input :}}
\algrenewcommand\algorithmicensure{\textbf{Output :}}
\newtheorem{theorem}{Theorem}
\newtheorem{lemma}[theorem]{Lemma}
\newtheorem{corollary}[theorem]{Corollary}
\newtheorem{proposition}[theorem]{Proposition}
\newtheorem{definition}[theorem]{Definition}
\DeclareMathOperator{\jac}{Jac}
\DeclareMathOperator{\crit}{crit}
\DeclareMathOperator{\sval}{sval}
\DeclareMathOperator{\nprop}{nprop}
\DeclareMathOperator{\spec}{spec}
\DeclareMathOperator{\atyp}{atyp}
\renewcommand{\phi}{\varphi}
\newcommand{\SA}{semi-algebraic\xspace}
\newcommand{\SAC}{connected\xspace}
\newcommand{\SACC}{connected component\xspace}
\newcommand{\SACCs}{connected components\xspace}
\newcommand{\ff}{\bm{f}}
\newcommand{\hh}{\bm{h}}
\newcommand{\bgg}{\bm{g}}
\newcommand{\pp}{{\bm{p}}}
\newcommand{\qq}{\bm{q}}
\newcommand{\bvv}{\bm{v}}
\newcommand{\xx}{{\bm{x}}}
\newcommand{\yy}{{\bm{y}}}
\newcommand{\zz}{{\bm{z}}}
\newcommand{\btheta}{\bm{\theta}}
\newcommand{\V}{\bm{V}}
\newcommand{\CC}{{\mathbb{C}}}
\newcommand{\RR}{{\mathbb{R}}}
\newcommand{\QQ}{{\mathbb{Q}}}
\newcommand{\Ecal}{\mathcal{E}}
\newcommand{\Ecalt}{\widetilde{\mathcal{E}}}
\newcommand{\Gcal}{\mathcal{G}}
\newcommand{\Hcal}{\mathcal{H}}
\newcommand{\Pcal}{\mathcal{P}}
\newcommand{\Qcal}{\mathcal{Q}}
\newcommand{\Vcal}{\mathcal{V}}
\newcommand{\Vcalt}{\widetilde{\mathcal{V}}}
\newcommand{\scrC}{\mathscr{C}}
\newcommand{\scrG}{\mathscr{G}}
\newcommand{\scrGt}{\widetilde{\mathscr{G}}}
\newcommand{\scrP}{\mathscr{P}}
\newcommand{\scrQ}{\mathscr{Q}}
\newcommand{\scrR}{\mathscr{R}}
\newcommand{\zclos}[1]{\overline{#1}^{z}}
\newcommand{\VR}{V_{\mathbb{R}}}
\newcommand{\Map}{\mathcal{R}}
\newcommand{\map}{r}
\newcommand{\maptrig}{x}
\newcommand{\critF}{\crit(\Map, V)}
\newcommand{\svalF}{\sval(\Map, V)}
\newcommand{\propF}{\nprop(\Map, V)}
\newcommand{\atypF}{\atyp(\Map, V)}
\newcommand{\specF}{\spec(\Map, V)}
\newcommand{\FspecF}{\atypF}
\newcommand{\ZFspecF}{\zclos{\atypF}}
\newcommand{\VmCrit}{\VR-\critF}
\newcommand{\VmSpec}{\VR-\specF}
\newcommand{\RmFSpec}{\RR^d-\FspecF}
\newcommand{\RmZFSpec}{\RR^d-\ZFspecF}
\newcommand{\SAS}{\mathscr{S}}
\newcommand{\degV}{\delta}
\newcommand{\degMap}{\mu}
\newcommand{\degtot}{D}
\newcommand{\NQ}{N}
\newcommand{\alert}[1]{\textcolor{red}{#1}}
\newcommand{\et}{\quad \text{and} \quad}
\newcommand{\call}[2]{\textsc{#1}{$\left({#2}\right)$}}
\author{Damien Chablat}
\affiliation{%
	\institution{LS2N, \textsc{CNRS}}
	\city{F-44321 Nantes}
	\postcode{44321}\country{France}}
\email{Damien.Chablat@cnrs.fr}
\author{R\'emi Pr\'ebet}
\affiliation{%
	\institution{Sorbonne Universit\'e, \textsc{CNRS}, \textsc{LIP6}}
	\city{F-75005 Paris}
	\postcode{75252}\country{France}}
\email{remi.prebet@lip6.fr}
\author{Mohab Safey El Din}
\affiliation{%
	\institution{Sorbonne Universit\'e, \textsc{CNRS}, \textsc{LIP6}}
	\city{F-75005 Paris}
	\postcode{75252}\country{France}}
\email{mohab.safey@lip6.fr}
\author{Durgesh H. Salunkhe}
\affiliation{%
	\institution{LS2N, \textsc{CNRS}}
	\city{F-44321 Nantes}
	\postcode{44321}\country{France}}
\email{durgesh.salunkhe@ls2n.fr}
\author{Philippe Wenger}
\affiliation{%
	\institution{LS2N, \textsc{CNRS}}
	\city{F-44321 Nantes}
	\postcode{44321}\country{France}}
\email{Philippe.Wenger@ls2n.fr}
\title{Deciding Cuspidality of Manipulators through Computer Algebra and Algorithms in Real Algebraic Geometry}
\begin{document}
\fancyhead{}
\begin{abstract}
  Cuspidal robots are robots with at least two inverse kinematic solutions that can be connected by a singularity-free path. Deciding the cuspidality of generic 3R robots has been studied in the past, but extending the study to six-degree-of-freedom robots can be a challenging problem. Many robots can be modeled as a polynomial map together with a real algebraic set so that the notion of cuspidality can be extended to these data.

In this paper we design an algorithm that, on input a polynomial map in $n$ indeterminates, and $s$ polynomials in the same indeterminates describing a real algebraic set of dimension $d$, decides the cuspidality of the restriction of the map to the real algebraic set under consideration. Moreover, if $D$ and $\tau$ are, respectively the maximum degree and the bound on the bit size of the coefficients of the input polynomials, this algorithm runs in time log-linear in $\tau$ and polynomial in $((s+d)D)^{O(n^2)}$.

It relies on many high-level algorithms in computer algebra which use advanced methods on real algebraic sets and critical loci of polynomial maps. As far as we know, this is the first algorithm that tackles the cuspidality problem from a general point of view.
\end{abstract}
\thanks{%
  The authors are supported by the joint ANR-FWF
  ANR-19-CE48-0015 \textsc{ECARP} project, the ANR grants ANR-18-CE33-0011
  \textsc{Sesame} and ANR-19-CE40-0018 \textsc{De Rerum Natura}
  projects, the European
  Union's Horizon 2020 research and innovation programme under the Marie
  Sk\l{}odowska-Curie grant agreement N. 813211 (POEMA) and the grant FA8665-20-1-7029 of the \textsc{EOARD-AFOSR}}

\begin{CCSXML}
<ccs2012>
   <concept>
       <concept_id>10010147.10010148.10010149</concept_id>
       <concept_desc>Computing methodologies~Symbolic and algebraic
       algorithms</concept_desc> 
       <concept_significance>500</concept_significance>
       </concept>
   <concept>
       <concept_id>10010147.10010148.10010162</concept_id>
       <concept_desc>Computing methodologies~Computer algebra 
systems</concept_desc>
       <concept_significance>500</concept_significance>
       </concept>
 </ccs2012>
\end{CCSXML}

\ccsdesc[500]{Computing methodologies~Symbolic and algebraic 
algorithms}
\ccsdesc[500]{Computing methodologies~Computer algebra
  systems}

\keywords{computational real algebraic geometry, symbolic computation, critical points, robotics, cuspidality}

\maketitle

\section{Introduction}

\paragraph{Problem statement}
Let \(\ff = (f_1, \ldots, f_s)\) be a sequence of polynomials in 
\(\QQ[x_1, \ldots, x_n]\) and $V=\V(\ff)\subset \CC^n$ be the algebraic set it defines (i.e. 
the set of common complex solutions to the $f_i$'s). 
We denote by $\VR=V\cap\RR^n$ the real trace of $V$.
Let $\Map = (\map_1, \ldots, \map_d)$ be a sequence of polynomials in
\(\QQ[x_1, \ldots, x_n]\). 
By a slight abuse of notation, we still denote by $\Map$ the map 
\[
    \Map: \yy\in \CC^n \mapsto (\map_1(\yy), \ldots, \map_d(\yy)) \in \CC^d.
\]
In the whole paper, we make the following assumption:
\begin{enumerate}[label=$(\textsf{\Alph*})$]
    \item\label{ass:A} the ideal generated by \(\ff\), which we denote by
    $\langle\ff\rangle$, is radical and equidimensional of dimension $d$ and
    $\VR$ is not contained in the singular set 
    of $V$.
\end{enumerate}

We denote by $\critF$ the union of the set of \emph{critical points} 
of the restriction of \(\Map\) to \(V\) and the set of \emph{singular points} of $V$ 
(see e.g. \cite[Appendix A.2.]{SS2017} for a definition of these objects).
Further, we denote by $\svalF$ the set of \emph{singular values} of the restriction of
$\Map$ to $V$, i.e. the image by $\Map$ of the set $\critF$: 
\[
    \svalF = \Map(\critF).
\]

Under assumption \ref{ass:A}, the set $\critF$ is the set of common complex 
solutions to the polynomials in \(\ff\) and the set of minors of size $n$ 
of the Jacobian matrix $\jac[\ff,\Map]$ associated to \(\ff, \Map\) (see e.g. \cite[Lemma A.2.]{SS2017}).


The restriction of the map $\Map$ to $V$ is said to be proper at a point $\yy\in\CC^d$ 
if there exists a ball $B\subset\CC^d$ containing $\yy$ such that $\Map^{-1}(B)\cap V$ is 
closed and bounded. The restriction of $\Map$ to $V$ is said to be proper if it is proper
at every point of $\CC^d$.

We denote by $\propF$ be the set of points of $\CC^d$ at which $\Map$ is 
\emph{not} proper. According to \cite[Theorem 3.8.]{Je1999} it is contained in a proper
algebraic set of $\CC^d$.

Finally we denote by $\atypF$ the set of \emph{atypical values} of the restriction of
$\Map$ to $V$, that is the union \(\svalF \cup \propF\), and let 
\[
    \specF=\Map^{-1}(\atypF) \cap V
\]
the set of \emph{special points} of the restriction of $\Map$ to $V$ that map to atypical values.
We denote by $\zclos{\atypF}$ the Zariski closure in $\CC^d$ of the set of atypical values.

Following the formalism introduced in \cite{wenger_new_1992}, we say that the restriction of the map \(\Map\) to $\VR$ is {\em cuspidal} if there exist two distinct points \(\yy\) and \(\yy'\) in \(\VR\) such that the following holds:
\begin{enumerate}[label=(\roman*)]
\item $\Map(\yy) = \Map(\yy')$;
\item there exists a \SACC $C$ of $\VR - \crit(\Map, V)$ which contains both $\yy$ and $\yy'$.
\end{enumerate}
If two such points $\yy$ and $\yy'$ exist, we say that they form a \emph{cuspidal couple} 
of the restriction of $\Map$ to $\VR$. Note that such a couple is not unique in general.

The above definition goes back to some original works in robotics and mechanism design
which we present below. 
The goal of this paper is to design an algorithm which, given as input $\ff$ and $\Map$
as above, decides whether the restriction of $\Map$ to $\VR$ is cuspidal.

\paragraph{Motivations from robotics} 
Cuspidal robots were discovered in the end of the eighties \cite{parenti-castelli_position_1988}. 
A cuspidal robot can move from one of its inverse kinematic solutions to another without meeting a singular configuration. A major consequence is that there is no simple way to know in which solution the robot operates during motion planning trajectories for cuspidal robots is more challenging than for their noncuspidal counterparts \cite{wenger_uniqueness_2004}. Knowing whether a robot under design is cuspidal or not is thus of primary importance. 

Most existing industrial robots are
known to be noncuspidal because they rely on some specific geometric design
rules such as their last three joint axes intersecting at a common point \cite{Wenger97}.
Recently, however, new robots have been proposed that do not follow the
aforementioned design rule, which, in turn, could make them cuspidal (see for e.g.,
\url{https://achille0.medium.com/why-has-no-one-heard-of-cuspidal-robots-fa2fa60ffe9b}).

Hence, obtaining an algorithm for deciding cuspidality is of first importance in this context of 
mechanism design.

\paragraph{Prior works} 
Cuspidal robots have been studied mostly for a specific family of robots
made with three revolute joints mutually orthogonal \cite{wenger_cuspidal_2007}. 
Such robots, were shown to be cuspidal if and only if they have at least one cusp
point in their workspace \cite{el1995recognize, wenger_generic_2022}. Accordingly, an algorithm can be written
that, starting from the inverse kinematic polynomial associated with the robot
at hand, counts the number of triple root of this polynomial. If this number
is nonzero, it means that the robot has at least one cusp and it is thus
cuspidal \cite{corvez_study_2005}. For a general robot, no necessary and sufficient condition 
is known to decide if this robot is cuspidal or not. Thus, no
general algorithm has been devised that can decide if a given arbitrary robot
is cuspidal or not.

The algorithm we design in this paper for deciding cuspidality relies on a family of algorithms 
for solving polynomial systems over the reals with different specifications. Further, we assume 
that all data $\ff$ and $\Map$ have coefficients in $\QQ$ so that bit complexity issues
can be covered without any restriction w.r.t. the application context we target. 

The first routine we use takes as input a polynomial system of $s$ equations and inequalities in 
$\QQ[x_1, \ldots, x_n]$ and returns an encoding of at least one point per connected component of 
the real solution set to the input system. When the input polynomials have degree at most $D$, 
this can be done in time singly exponential in $n$ and polynomial in $D$ and $s$ using the critical point method introduced in \cite{GV1988} and developed in \cite{SS2003,BPR2006,LS2020}. 
The algorithm in \cite{LS2020} is the one which we will specifically use.

The second routine we rely on still takes as input a polynomial system of equations and inequalities, 
as well as the encoding of some query points in the solution set $S \subset \RR^n$ to the input system. 
It then computes an encoding for a semi-algebraic curve, called a roadmap, which has a non-empty 
and connected intersection with all connected components of $S$ and contains all the query points.
This is done in time singly exponential in $n$, polynomial in $D$ and $s$ using more advanced critical 
point methods initiated by Canny in \cite{Ca1988, Ca1988bis, Ca1991, Ca1993} and improved later on in 
\cite{BPR2006,SS2011,BR2014, BRSS2014, SS2017}.

\paragraph{Main results}
In this paper we design an algorithm for deciding the cuspidality on input $\ff$ 
and ${\Map}$ under assumption \ref{ass:A}. Recall that $V = \V(\ff)$ is the algebraic set defined 
by $\ff$ and that $V_\RR = V\cap \RR^n$. When the restriction of the map $\Map$ to $V_\RR$ 
is cuspidal, the algorithm has the ability to output a {\em witness of cuspidality}, i.e. 
a cuspidal couple and an encoding of a semi-algebraic path which connects them in $V_\RR$ 
without meeting $\crit(\Map, V)$. 

Next, we analyze the bit complexity of this algorithm and prove that cuspidality can be decided in time 
singly exponential in $n$, polynomial in the maximum degree of the input polynomials, the integer 
$d$ and $\log$-linear in the maximum bit size of the input coefficients. We use the big-O notation in 
a standard way \cite[Section 3.1]{CLRS2009}. Further, for $\tau\in \RR$, $\tau^\star$ denotes the class
$O(\tau \log(\tau)^a)$ for some constant $a > 0$. 

This leads to the following statement.

\begin{theorem}\label{thm:mainresult}
Let \(\ff = (f_1, \ldots, f_s)\) and $\Map = (\map_1, \ldots, \map_d)$ be two sequences of polynomials in \(\QQ[x_1, \ldots, x_n]\)
, let $V=\V(\ff)$ and $\VR = V\cap\RR^n$.
 Let $\degtot$ be the maximum degree of these polynomials and
 let $\tau$ be a bound on the bit size of the coefficients of the input polynomials.
 Then, under assumption \ref{ass:A}, one can decide the cuspidality of the
 restriction of the map $\Map$ to $\VR$ using at most
 \[
    \tau^\star ((s+d)\degtot)^{O(n^2)}
 \]
 bit operations.
\end{theorem}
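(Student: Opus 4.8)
\emph{The plan is} to reduce deciding cuspidality to a finite list of connectivity queries in the semi-algebraic set $\VmCrit$, each answered by the roadmap routine, once the atypical values have been described and their complement has been sampled. The first thing I would establish is the characterization: the restriction of $\Map$ to $\VR$ is cuspidal if and only if there exist a connected component $C$ of $\VmCrit$ and a value $\vv \in \RR^d \setminus \ZFspecF$ with $\card(\Map^{-1}(\vv) \cap C) \ge 2$. The nontrivial implication rests on the fact that, under \ref{ass:A} and the hypothesis that $\VR$ is not contained in the singular locus of $V$, the set $\VmCrit$ is a real $d$-manifold on which the restriction of $\Map$ is a submersion onto $\RR^d$, hence a local diffeomorphism. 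Given a cuspidal couple $\yy, \yy'$ lying in a component $C$, I would take disjoint open neighbourhoods $U \ni \yy$ and $U' \ni \yy'$ inside $C$ mapped diffeomorphically onto open sets $\Omega$ and $\Omega'$ both containing $\Map(\yy)$; since $\ZFspecF$ is a \emph{proper} algebraic subset of $\CC^d$ --- its $\svalF$ part by a generic smoothness argument under \ref{ass:A}, its $\propF$ part by \cite[Theorem 3.8]{Je1999} --- the set $\ZFspecF \cap \RR^d$ has empty interior, so $(\Omega \cap \Omega') \setminus \ZFspecF$ is nonempty, and the two diffeomorphic preimages of any of its points form the required pair in $C$.

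The second step turns this into a finite test via Ehresmann's fibration theorem. Over $W := \RR^d \setminus (\ZFspecF \cap \RR^d)$ the restriction $\Map \colon \Map^{-1}(W) \cap \VR \to W$ is a proper submersion, hence a locally trivial fibration with finite fibres; over each connected component $B$ of $W$ it is trivial, so the connected components (the \emph{sheets}) of $\Map^{-1}(B) \cap \VR$ each lie in a single connected component of $\VmCrit$ and meet every fibre above $B$ in exactly one point. It follows that the restriction of $\Map$ to $\VR$ is cuspidal if and only if, for some connected component $B$ of $W$, two distinct points of one fibre $\Map^{-1}(\vv) \cap \VR$ with $\vv \in B$ lie in the same connected component of $\VmCrit$; by triviality this is independent of the choice of $\vv$ in $B$, so it suffices to test it at one sample point per connected component of $W$.

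The algorithm is then: (a) compute $\critF$ from $\ff$ and the $n \times n$ minors of $\jac[\ff, \Map]$; (b) compute $\overline{\svalF}$ by eliminating $x_1, \ldots, x_n$ from the system defining the graph of $\Map$ restricted to $\critF$, compute $\propF$ by an algorithm stemming from \cite{Je1999}, and take the Zariski closure of the union to get a description of $\ZFspecF$; (c) apply the point-per-connected-component algorithm \cite{LS2020} to a system defining $\RR^d \setminus (\ZFspecF \cap \RR^d)$, producing sample points $\vv_1, \ldots, \vv_m$; (d) for each $\vv_j$, solve the zero-dimensional system $\ff = \OO$, $\Map = \vv_j$ to obtain the fibre $\Map^{-1}(\vv_j) \cap \VR$; (e) compute, with \cite{BRSS2014,SS2017}, a roadmap of the semi-algebraic set $\VmCrit$ containing all these fibre points among its query points; (f) decide by graph traversal in the roadmap whether some fibre over some $\vv_j$ has two of its points in the same connected component, and in the positive case output that couple together with a witnessing path read off the roadmap. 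Correctness is exactly the equivalence from the second step.

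For the complexity bound I would track degrees and bit sizes along (a)--(f). The system defining $\critF$ has $\binom{s+d}{n}$ polynomials of degree $O(n\degtot)$; the elimination in (b) produces a defining equation for $\overline{\svalF}$ of degree $((s+d)\degtot)^{O(n)}$, via the B\'ezout bound on $\deg \critF$, and similarly for $\propF$, so $\ZFspecF$ has degree $\Delta = ((s+d)\degtot)^{O(n)}$; step (c) then costs $\Delta^{O(d)}$ and returns $\Delta^{O(d)}$ sample points; each fibre in (d) is finite of cardinality $\degtot^{O(n)}$, so the total number of query points is $Q = ((s+d)\degtot)^{O(nd)}$; the roadmap computation (e), run on $n$ variables, $(s+d)^{O(n)}$ polynomials of degree $O(n\degtot)$ and $Q$ query points, costs $\tau^\star ((s+d)\degtot)^{O(n^2)}$ after absorbing $Q$ using $d \le n$; steps (a) and (f) are dominated, and since all intermediate coefficients stay of bit size polynomial in $n, \degtot, s, d$ and linear in $\tau$, we obtain the announced $\tau^\star$ factor. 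I expect the main obstacles to be twofold: on the geometric side, making the fibration argument of the first two steps fully rigorous on the possibly singular and non-compact real set $\VR$ --- in particular identifying $\VmCrit$ as precisely the locus where $\Map$ restricted to $V$ is a local diffeomorphism, and verifying that $\ZFspecF$ is a genuine proper subvariety of $\CC^d$ so that atypical values can be dodged by an arbitrarily small perturbation; and on the computational side, the roadmap-with-query-points computation of step (e), responsible for the $O(n^2)$ in the exponent, together with keeping the degree of $\ZFspecF$ under control through the elimination of step (b).
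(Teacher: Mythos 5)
Your proposal is correct and follows essentially the same route as the paper: characterize cuspidality via a sampled fiber over each connected component of the complement of the atypical values, then answer the resulting finite list of connectivity queries in $\VmCrit$ with a roadmap, with the identical complexity bookkeeping. The only cosmetic differences are that the paper packages the argument through an intermediate ``cuspidality graph'' abstraction (Definition~\ref{def:cuspgraph} and Lemma~\ref{lem:cuspchara}) and invokes the semi-algebraic Thom isotopy lemma \cite{CS1995} rather than Ehresmann's fibration theorem, but the underlying trivialization-over-a-connected-component argument is the same.
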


We also illustrate how this algorithm runs on classical examples from robotics. 

\paragraph{Structure of the paper}
Section~\ref{sec:prelim} is devoted to recall preliminaries about the subroutines we use and 
Thom's isotopy lemma which is a key ingredient to the correctness proof of our algorithm. Section~\ref{sec:algo} 
is devoted to the formal description of our algorithm and its proof of correctness. The complexity analysis 
is completed in Section~\ref{sec:complexity}. Finally, Section~\ref{sec:example} illustrates how our algorithm
runs on a concrete application from robotics.

\section{Auxiliary algorithms and results}\label{sec:prelim}


\subsection{Sample points algorithms}\label{ssec:sample}
Recall that a semi-algebraic set has only finitely many \SACCs \cite[Theorem 
2.4.4.]{BCR1998}. 
Hence computing at least one point in each of these components constitutes a 
basic subroutine of many algorithms that handle \SA sets.

To encode such points, we use so called {\em zero-di\-men\-sional paramet\-rizations}. 
A zero-dimensional paramet\-rization $\scrP = (\Omega, \lambda)$ is a couple as follows:
\begin{itemize}
\item $\Omega = (\omega, \rho_1, \ldots, \rho_n)$ of polynomials in $\RR[u]$ where $u$ is a 
new variable and $\omega$ is a monic square-free polynomial and $\deg(\rho_i) < \deg(\omega)$;
\item $\lambda$ is a linear form $\lambda_1 x_1 + \cdots + \lambda_n x_n$ in $\RR[x_1, \ldots, x_n]$
\end{itemize}
such that 
\[
\lambda_1 \rho_1 + \cdots + \lambda_n \rho_n = u \frac{\partial \omega}{\partial u} \mod \omega.
\]
Such a data-structure encodes the finite set of points, denoted by $Z(\scrP)$, defined as follows
\[
Z(\scrP) = \left \{\left (\frac{\rho_1(\vartheta)}{\partial \omega / \partial u (\vartheta)}, 
\ldots, 
\frac{\rho_n(\vartheta)}{\partial \omega / \partial u (\vartheta)}
\right) \in\CC^n \mid \omega(\vartheta) = 0\right \}.
\]
We define the {\em degree} of such a parametrization $\scrP$ as the degree of 
the polynomial $\omega$.

We describe a subroutine which takes as input two sequences of polynomials 
$\bgg = (g_1, \ldots, g_s)$ and $\hh = (h_1, \ldots, h_t)$ in $\RR[x_1, \ldots, x_n]$ and outputs 
a sequence of zero-dimensional parametrizations 
\[
    \scrP_1\,, \,\ldots\,,\, \scrP_{r}
\]
such that 
\[
Z(\scrP_1) \cup \cdots \cup Z(\scrP_r) 
\]
has a non-empty intersection with all \SACCs of the semi-algebraic set of $\RR^n$ defined by 
\[
g_1 = \cdots = g_s= 0, \qquad h_1 > 0, \ldots, h_t > 0. 
\]
Further, we denote by $\SAS(\bgg, \hh)\subset \RR^n$ the semi-algebraic set defined by the above systems so that 
$\SAS(0, \hh)$ is the open semi-algebraic set defined by $h_1 > 0, \ldots, h_t > 0$.

We assume that $\bgg$ and $\hh$ have coefficients in $\QQ$ of maximum bit size $\tau$. In that case, the 
polynomials in the output zero-dimensional parametrizations also have coefficients in $\QQ$. We recall the following 
result which allows us to control the cost of computing sample points in semi-algebraic sets. 
\begin{proposition}[{\cite[Algorithm 12.64]{BPR2006}}]\label{prop:sample} 
There exists an algorithm \textsc{SamplePoints} which on input $\bgg$ and $\hh$ as above, with $D$ the maximum degree of 
the $g_i$'s and the $h_i$'s, computes at least one point per \SACCs of $\SAS(\bgg, \hh)$ by means of 
zero-dimensional parametrizations of degree bounded by $D^{O(n)}$ using 
\[\tau (tD)^{O(n)}\]
bit operations.
\end{proposition}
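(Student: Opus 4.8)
This statement is classical; since it is the complexity cornerstone of everything that follows, let me sketch the \emph{critical point method} underlying \cite[Algorithm 12.64]{BPR2006} and indicate where the bounds come from. The plan is to reduce the computation of sample points of $\SAS(\bgg,\hh)$ to the computation of the critical locus of a generic linear projection restricted to an auxiliary real algebraic set that is \emph{smooth and bounded}, and then to solve the resulting zero-dimensional system with a symbolic solver whose cost is governed by B\'ezout-type bounds.

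First I would get rid of the strict inequalities and of unboundedness through infinitesimal deformations. Set $Q = g_1^2 + \cdots + g_s^2$, so that the real zero set of $Q$ is exactly $\V(\bgg)\cap\RR^n$. Introducing an infinitesimal $\varepsilon > 0$, replace each condition $h_j > 0$ by $h_j \ge \varepsilon$ and intersect with the ball of radius $1/\varepsilon$; the resulting closed and bounded semi-algebraic set $S_\varepsilon$ over the real closed field $\RR\langle\varepsilon\rangle$ satisfies $\lim_{\varepsilon\to 0} S_\varepsilon \subseteq \V(\bgg)\cap\RR^n$ and meets every \SACC of $\SAS(\bgg,\hh)$ (here one uses that $\SAS(\bgg,\hh)$ is locally closed together with curve-selection arguments, see \cite[Chapters 3 and 5]{BPR2006}). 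The boundary of $S_\varepsilon$ is contained in the union of the hypersurfaces $Q=0$, $h_j-\varepsilon = 0$ and $x_1^2+\cdots+x_n^2-\varepsilon^{-2}=0$; deforming each of them and their intersections further, by adding a small multiple (a second infinitesimal) of $x_1^{2k}+\cdots+x_n^{2k}$ with $k$ just above half the degree, makes them smooth and bounded at the price of raising degrees by a constant factor only. Thus, for each choice of a subset of the indices $j$ declared active, one obtains a smooth bounded real algebraic set $W$ defined by polynomials of degree $D^{O(1)}$ in $n+O(1)$ variables and of dimension at most $n-1$.

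Next, after a generic linear change of coordinates, the restriction to $W$ of the projection $\xx\mapsto x_1$ is proper with finite critical locus, and since $W$ is bounded this critical locus meets every connected component of $W$; ranging over the choices of active constraints, the union of these critical loci therefore meets every \SACC of $S_\varepsilon$, hence --- after passing to the limit $\varepsilon\to 0$ --- every \SACC of $\SAS(\bgg,\hh)$. The critical locus of $W$ is the common zero set of the polynomials defining $W$ and of the maximal minors of their Jacobian matrix with the column of partial derivatives in $x_1$ removed; this is a zero-dimensional system whose B\'ezout number is $D^{O(n)}$, and I would solve it with the zero-dimensional solver of \cite[Chapter 12]{BPR2006} (multivariate resultants plus a rational univariate representation) to produce zero-dimensional parametrizations of degree $D^{O(n)}$. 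Passing to the limit $\varepsilon\to 0$ amounts to computing limits of the associated Puiseux series, again an operation on parametrizations preserving this degree bound; a final pass of sign evaluations discards the points violating $g_1 = \cdots = g_s = 0,\ h_1>0,\ldots,h_t>0$.

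It remains to bound the cost. Since fixing each active constraint drops the dimension, only subsets of at most $n$ of the $t$ inequalities are relevant, so there are $t^{O(n)}$ cases; each case runs the zero-dimensional solver on $O(t)$ polynomials in $n+O(1)$ variables of degree $D^{O(1)}$, and the solver's arithmetic cost is polynomial in its B\'ezout bound and in the number of input polynomials, i.e.\ $(tD)^{O(n)}$; all numbers manipulated have bit size $\tau\,(tD)^{O(n)}$, and each integer operation on $\ell$-bit numbers costs $\Otilde(\ell)$, so the overall cost is $\tau\,(tD)^{O(n)}$ bit operations, with output parametrizations of degree $D^{O(n)}$ as claimed. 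The main obstacle in turning this into a complete proof is precisely the first step: one must set up the infinitesimal deformations so that the passage to the limit is faithful on the \SACCs of a merely locally closed set \emph{and} the intermediate algebraic sets are genuinely smooth, bounded and of the expected dimension, all while keeping the degrees in $D^{O(1)}$ and the number of cases in $t^{O(n)}$ --- this careful bookkeeping is the technical heart of \cite[Chapters 12 and 13]{BPR2006}.
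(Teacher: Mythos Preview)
Your sketch is correct and follows the same approach that the paper outlines informally right after the statement: sum of squares to reduce to one equation, an infinitesimal $\varepsilon$ to pass to the closed set $g=0,\ h_1\ge\varepsilon,\ldots,h_t\ge\varepsilon$, then \cite[Proposition 13.1]{BPR2006} to reduce to real algebraic sets, and finally the critical point method. The paper itself does not give a proof here---it simply cites \cite[Algorithm 12.64]{BPR2006}---so your more detailed expansion of the same chain of reductions (including the $t^{O(n)}$ case count from choosing at most $n$ active constraints and the B\'ezout bound $D^{O(n)}$) is entirely in line with what the paper relies on.
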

Ideas underlying \textsc{SamplePoints} are the following. First, it considers the hypersurface defined by 
$g = 0$ where $g = g_1^2+ \cdots + g_s ^2$ to handle a unique equation. Next, it introduces 
an infinitesimal $\varepsilon$ to reduce the original problem to the one of computing sample points in each connected 
component of the closed semi-algebraic set defined by 
\[
g = 0, \qquad h_1\geq \varepsilon, \ldots, h_t \geq \varepsilon. 
\]
The latter is done through \cite[Proposition 13.1]{BPR2006} which allows one to 
reduce the original problem to the one of computing sample points in real algebraic sets. 
The latter is done through the so-called {\em critical point method} which consists in computing 
the critical points of a well-chosen polynomial map reaching its extrema on all \SACCs of the considered real algebraic set. 

Such a solving scheme has been refined and improved in particular cases such as the one considered in 
\cite[Section 3.]{LS2020}, where the semi-algebraic set is open and where explicit complexity constants in the big-O 
exponent are well controlled. The following result is a simplification of the statement in \cite[Corollary 3]{LS2020}. 

\begin{corollary}[{\cite[Corollary 3]{LS2020}}]\label{cor:samplerational}
There exists an algorithm \textsc{SamplePointsRational} which on input $\hh$ as above, with $D$ the maximum degree of the $h_i$'s, computes a set of points $\Qcal$ in $\QQ^n$ of cardinality at most
$D^{O(n)}$ and such that $\Qcal$ meets every \SACCs of $\RR^n-\V(\hh)$ using
\[
    \tau (tD)^{O(n)}
\]
bit operations.
\end{corollary}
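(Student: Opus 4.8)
The plan is to reduce the statement to the case of a single strict polynomial inequality — precisely the situation handled in \cite[Section~3]{LS2020} — and then to recall the main steps of the rational sampling procedure used there. Since the $h_i$ have rational coefficients and we work over $\RR$, one has $\V(\hh)\cap\RR^n=\{x\in\RR^n : g(x)=0\}$ for $g:=h_1^2+\cdots+h_t^2\in\QQ[x_1,\dots,x_n]$, which satisfies $g\geq 0$ on $\RR^n$; hence
\[
\RR^n-\V(\hh)=\{x\in\RR^n : g(x)>0\}
\]
is an open semi-algebraic set defined by a single polynomial of degree at most $2D$ with rational coefficients of bit size $O(\tau+n+D+\log t)$. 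Expanding $g$ costs $\tau(tD)^{O(n)}$ bit operations, and afterwards $t$ no longer influences the problem: it suffices to compute, in $\tau D^{O(n)}$ bit operations, a set $\Qcal\subseteq\QQ^n$ of cardinality $D^{O(n)}$ meeting every \SACC of $\{g>0\}$ (if $g\equiv 0$ then $\Qcal=\emptyset$, so we may assume $g\not\equiv 0$).

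Following \cite{LS2020}, the idea is to replace $\{g>0\}$ by a smooth closed model of controlled complexity through a Rabinowitsch-type construction: let $W:=\V(y\,g(x)-1)\subseteq\RR^{n+1}$, a hypersurface defined by a single polynomial of degree at most $2D+1$. On $W$ one has $g(x)=1/y\neq 0$, so the last coordinate of $\nabla(yg-1)=(y\,\nabla_x g,\,g)$ never vanishes on $W$; thus $W$ is smooth, and $\pi\colon(x,y)\mapsto x$ is a homeomorphism from $W$ onto $\{g>0\}$, with inverse $x\mapsto(x,1/g(x))$. Hence $\pi$ induces a bijection between the \SACCs of $W$ and those of $\{g>0\}$, taking sample points to sample points, and it is enough to sample $W$. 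This is done by the critical point method: for a suitable $A\in\QQ^{n+1}$ — the admissible choices form a Zariski-dense set, so $A$ can be taken of bit size $O(n\log D)$ — one solves the zero-dimensional system defining the critical points of the squared distance to $A$ on $W$ (those $(x,y)\in W$ at which $(x,y)-A$ is normal to $W$). That system has at most $(2D+1)^{n+1}=D^{O(n)}$ complex solutions and is solved in $\tau D^{O(n)}$ bit operations (e.g.\ via \textsc{SamplePoints}, Proposition~\ref{prop:sample}, or any standard zero-dimensional solver), returning zero-dimensional parametrizations of degree $D^{O(n)}$. Since $W$ is closed in $\RR^{n+1}$, each of its \SACCs is closed and nonempty, while the squared distance to $A$ is coercive; hence its minimum on each \SACC of $W$ is attained at one of these critical points, and the real solutions of the system meet every \SACC of $W$.

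It would then remain to round these (a priori algebraic) solutions to rational points without leaving their component. If $q=(q_x,q_y)\in W\cap\RR^{n+1}$ is a real solution, then $q_x\in\{g>0\}$, and any $p\in\QQ^n$ with $\|p-q_x\|<\mathrm{dist}(q_x,\V(g))$ lies in the same \SACC of $\{g>0\}$ as $q_x$, since the whole segment $[q_x,p]$ stays in $\{g>0\}$. An effective separation bound gives $\mathrm{dist}(q_x,\V(g))\geq 2^{-\tau D^{O(n)}}$ — a Liouville-type lower bound on the nonzero algebraic number $g(q_x)$, divided by a Lipschitz constant of $g$ on a unit ball around $q_x$ — and a rational $p$ approximating $q_x$ to that accuracy is read off the parametrization of $q$ by near-optimal univariate root approximation, in $\tau D^{O(n)}$ bit operations (linear in $\tau$ up to logarithmic factors). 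Collecting one such $p$ per real solution of the system yields $\Qcal$ with $|\Qcal|\leq D^{O(n)}$, and, adding the cost of expanding $g$, the total is $\tau(tD)^{O(n)}$ bit operations. The main obstacle is this passage to rational coordinates: one must certify through explicit separation bounds that the rounded point has not crossed $\V(\hh)$ and keep the rounding precision — hence the running time — linear in $\tau$. By comparison, choosing $A$ generic enough and of small bit size, and verifying that the distance-to-$A$ critical points meet all \SACCs of $W$, are more routine transversality arguments.
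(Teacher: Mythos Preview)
The paper does not give its own proof of this corollary: it is quoted verbatim as a simplification of \cite[Corollary~3]{LS2020}, with no argument beyond the citation. So there is no in-paper proof to compare your attempt against.

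That said, your sketch is a faithful and essentially correct reconstruction of the method behind that reference. The reduction to a single nonnegative polynomial $g=\sum h_i^2$, the Rabinowitsch lift $W=\V(yg-1)$ (smooth because $g\neq 0$ on $W$, and homeomorphic to $\{g>0\}$ via $(x,y)\mapsto x$ since $g\geq 0$), and the use of coercivity of the squared distance to a generic rational center $A$ to guarantee that the (finitely many) critical points meet every connected component of the closed, possibly unbounded set $W$ are all sound. Your handling of the rounding step is also correct in outline: the key inequality $\mathrm{dist}(q_x,\V(g)\cap\RR^n)\geq g(q_x)/L$ together with a Liouville-type lower bound on the nonzero algebraic number $g(q_x)$ and a Cauchy-type upper bound on $\|q_x\|$ (hence on $L$) yields a precision target of the form $2^{-\tau D^{O(n)}}$, reachable by standard univariate root refinement in time linear in $\tau$ up to polylogarithmic factors. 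You rightly flag this last passage as the point where the bookkeeping must be done carefully to keep the dependence on $\tau$ linear; this is exactly the contribution of \cite{LS2020} over the generic sampling routine of Proposition~\ref{prop:sample}.
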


\subsection{Algorithms for connectivity queries}

We also use algorithms which answer connectivity queries on semi-algebraic sets. 
This is done in two steps. First, on input data which encode a semi-algebraic set $S$ 
under consideration and query points $\Pcal$, one computes a {\em semi-algebraic curve} 
containing $\Pcal$ and whose intersection with all connected components of $S$ is non-empty and 
connected. Hence, we have reduced the original connectivity queries to connectivity queries on a 
semi-algebraic curve. To solve the latter, we rely on classical tools of computer algebra such as 
resultants and real root isolation which are used in algorithms such as the ones in 
\cite{Ka2008,SW2005,JC2021, DMR2012, CJL2013} for this purpose. 

A few words about the encoding of such semi-algebraic curves are in order. Note that a semi-algebraic 
curve is the intersection of an algebraic curve with a given semi-algebraic set. Further, as in e.g. \cite[Section 1.2.]{SS2017} (see also references therein), 
we encode an algebraic curve with 
a {\em one-dimensional rational para\-met\-rization} $\scrR = (\Omega, (\lambda, \mu))$ which is a couple as follows: 
\begin{itemize}
\item $\Omega = (\omega, \rho_1, \ldots, \rho_n)$ of polynomials in $\RR[u, v]$ where $u$ and $v$ are  
new variables and $\omega$ is a monic in $u$ and $v$, square-free polynomial and $\deg(\rho_i) < \deg(\omega)$;
\item $(\lambda, \mu)$ is a couple of linear forms 
\[\lambda_1 x_1 + \cdots + \lambda_n x_n\quad \text{ and }\quad 
\mu_1 x_1 + \cdots + \mu_n x_n
\] 
in $\RR[x_1, \ldots, x_n]$, 
\end{itemize}
such that 
\[
\lambda_1 \rho_1 + \cdots + \lambda_n \rho_n = u \frac{\partial \omega}{\partial u} \mod \omega,
\]
and 
\[
\mu_1 \rho_1 + \cdots + \mu_n \rho_n = v \frac{\partial \omega}{\partial u} \mod \omega.
\]
Such a data-structure encodes the algebraic curve $Z(\scrR)$, defined as the Zariski closure of the following constructible set of $\CC^n$
\[
 \left \{\left (\frac{\rho_1(\vartheta, \eta)}{\partial \omega / \partial u (\vartheta, \eta)}, 
\ldots, 
\frac{\rho_n(\vartheta, \eta)}{\partial \omega / \partial u (\vartheta, \eta)}
\right)
\,\middle|\,
\omega(\vartheta, \eta) = 0, \frac{\partial \omega}{\partial u}(\vartheta, \eta)\neq 0\right \}.
\]
We define the {\em degree} of such a parametrization $\scrR$ as the degree of $\omega$ which coincides with 
the degree of $Z(\scrR)$. Note that such 
a parametrization $\scrR$ of degree $\delta$ involves $O(n\delta^2)$ coefficients. 

As above, we consider sequences of polynomials $\bgg = (g_1, \ldots, g_s)$ and $\hh = 
(h_1, \ldots, h_t)$ in $\QQ[x_1, \ldots, x_n]$ and we let $\SAS(\bgg, \hh)$ be 
the semi-algebraic set defined by 
\[
g_1= \cdots = g_s = 0, \quad h_1 > 0, \ldots, h_t > 0.
\]
We also let $\scrP$ be a zero-dimensional parametrization with coefficients
in $\QQ$. 

We consider an algorithm which, on input $\bgg$, $\hh$ and $\scrP$ computes a 
{\em one-dimensional rational parametrization} $\scrR$ with coefficients in $\QQ$ such that:
\begin{itemize}
\item the finite set of points $Z(\scrP)$ is contained in the algebraic curve $Z(\scrR)$;
\item the intersection of the algebraic curve $Z(\scrR)$ with the semi-algebraic set defined by 
\[
h_1 > 0, \ldots, h_t > 0
\]
is contained in $\SAS(\bgg, \hh)$ and has a non-empty and connected intersection with all its connected components.
\end{itemize}
Such an output is called a {\em roadmap} for the couple $\left(\SAS(\bgg, \hh), Z(\scrP)\right)$
since it designs a semi-algebraic curve which captures the connectivity 
of $\SAS(\bgg, \hh)$ as well as the relative position of all the points in 
$\left(\SAS(\bgg, \hh) \cap  Z(\scrP)\right)$. 
Hence connectivity queries on $\SAS(\bgg, \hh)$ are reduced to connectivity queries on the curve defined by the roadmap.

\begin{proposition}[{\cite{BPR2000}}]\label{prop:roadmap}
Let $\bgg$, $\hh$ and $\scrP$ be respectively two polynomial sequences and a zero-dimensional
parametrization as above. Assume the entries of $\bgg$ and $\hh$ 
have degree bounded by $D$ and let $\delta$ be the degree of $\scrP$.
Let $\tau$ be a bound on the bit size of the input coefficients. There exists an algorithm \textsc{Roadmap} which computes
a one dimensional rational parametrization as above using
\[
\tau^\star t^{O(n)} \delta D^{O(n^2)}
\]
bit operations. 
Besides, the degree of the output rational parametrization is polynomial in $t^{n+1}\delta D^{n^2}$. 
\end{proposition}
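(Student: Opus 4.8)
The plan is to reconstruct and analyze the \textsc{Roadmap} algorithm following the recursive ``divide-and-conquer on the ambient dimension'' scheme that goes back to Canny's roadmap construction and is made effective over a variety in \cite{BPR2000}. First I would fold the query points $Z(\scrP)$ into the input data. Since $\SAS(\bgg,\hh)$ is cut out by equations and \emph{strict} inequalities, I would introduce an infinitesimal $\varepsilon$ and replace each $h_j>0$ by $h_j\ge\varepsilon$ to reach the closed case, then intersect with a ball of radius $1/\varepsilon'$ for a second infinitesimal $\varepsilon'$ and encode the resulting inequalities by slack variables as in \cite[Proposition 13.1]{BPR2006}, reducing the task to that of computing a roadmap of a \emph{bounded} real algebraic set $W$ in $O(n+t)$ variables over a real closed field of algebraic Puiseux series; a generic linear change of coordinates then puts $W$ in good position for the successive coordinate projections. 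Boundedness is what makes the critical-point arguments below valid, which is why this reduction is essential.

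For the recursion, let $\pi_1$ and $\pi_2$ be the projections onto $X_1$ and onto $(X_1,X_2)$. I would compute: (a) the \emph{silhouette} $\Sigma$, the Zariski closure of the set of critical points of $\pi_2|_W$, which for $W$ in good position is an algebraic curve and, $W$ being bounded, has non-empty intersection with every semi-algebraically connected component of the fiber $W_c=W\cap\{X_1=c\}$ for all but finitely many $c$; (b) a finite set $N\subset\RR$ of distinguished values, namely the critical values of $\pi_1|_W$ and of $\pi_1|_\Sigma$ together with the $X_1$-coordinates of the query points; (c) for each $c\in N$, recursively a roadmap of the fiber $W_c$ (living in one fewer variable) with query set enlarged by $\Sigma\cap W_c$ and the points of $\Sigma$ lying over $c$. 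The roadmap is $\Sigma$ together with the union of these fiber roadmaps; it is assembled from the one-dimensional rational parametrizations of the relevant critical loci through the change-of-coordinates and union primitives of computer algebra, and pulling it back through the reductions above — specializing the two infinitesimals to suitable rationals — produces the desired $\scrR$ with coefficients in $\QQ$.

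The crux is correctness: for every connected component $C$ of $W$, the intersection of the output with $C$ must be non-empty (immediate from (a)) and connected. Given two points of the roadmap in $C$, one takes a semi-algebraic path joining them inside $C$, pushes it forward by $\pi_1$, and cuts it at the values of $N$; over each open slab $I$ between two consecutive distinguished values, Thom's first isotopy lemma — recalled in Section~\ref{sec:prelim} — trivializes $W\cap\pi_1^{-1}(I)$ as a product $W_{c''}\times I$, so the portion of the path lying over $I$ can be homotoped rel endpoints onto $\Sigma$ using that $\Sigma$ meets every component of the model fiber, while at each $c\in N$ the adjacent pieces are reconnected through the recursively computed roadmap of $W_c$, which by the induction hypothesis is connected inside each component of $W_c$; enlarging the query set by $\Sigma\cap W_c$ is precisely what forces the endpoints of the pieces to land in matching components so that these local moves glue into a single path. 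The delicate points are the genericity of the linear change of coordinates and the properness of $\pi_1$ on $W$ and on $\Sigma$, both supplied by the bounded algebraic reduction.

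Finally, for the complexity, the ambient dimension drops by one at each recursive call, so the recursion tree has depth $O(n)$ (depth $O(n+t)$ after the reduction, but $t$ only enters the base of the exponent). At a node with $k$ remaining variables and polynomials of degree at most $D'\le D^{O(n)}$, computing $\Sigma$ and $N$ is a critical-point computation in the spirit of Proposition~\ref{prop:sample}: it costs $\tau^\star(tD')^{O(k)}$ bit operations and outputs parametrizations of degree $(D')^{O(k)}$ and $(D')^{O(k)}$ new distinguished fibers. A careful accounting of how the $\delta$ query points propagate — they are carried along the branches without themselves causing further branching — keeps the total number of leaves of the recursion tree bounded by $\delta\,D^{O(n^2)}$, the per-level exponents telescoping into $\sum_{k\le n}O(k)=O(n^2)$; multiplying by the $\tau^\star(tD')^{O(n)}$ cost at each node gives the announced bound $\tau^\star\, t^{O(n)}\,\delta\, D^{O(n^2)}$. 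The same bookkeeping on degrees shows that the union defining $\scrR$ has degree at most the number of pieces times their individual degrees, hence polynomial in $t^{n+1}\delta D^{n^2}$, and the only cost superlinear in $\tau$ — absorbed in $\tau^\star$ — stems from the real-root isolation and sign determinations performed over the towers of infinitesimals.
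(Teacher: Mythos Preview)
The paper does not prove this proposition at all: it is stated as a quotation of a known result, with the attribution \cite{BPR2000} in the header, and no proof is given in the body. So there is nothing to compare your attempt against; you are reconstructing the literature result, not an argument present in the paper.

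On its own merits your sketch follows the correct high-level architecture (Canny's silhouette--and--slices recursion, connectivity via isotopy over slabs, query points carried into distinguished fibers). One point deserves scrutiny, though: your reduction step replaces the $t$ strict inequalities by slack-variable equations and lands in an ambient space of $O(n+t)$ variables. If the recursion then has depth $O(n+t)$, the telescoping exponent becomes $O((n+t)^2)$, not $O(n^2)$, and you would obtain $D^{O((n+t)^2)}$ rather than $t^{O(n)}D^{O(n^2)}$. Your parenthetical ``$t$ only enters the base of the exponent'' is asserted but not argued, and with the slack-variable reduction as stated it is not obviously true. The route actually taken in \cite{BPR2000,BPR2006} is different: one computes a roadmap of the underlying \emph{algebraic} set in the original $n$ variables and then refines it with respect to the family of sign conditions on $h_1,\ldots,h_t$, which is where the combinatorial factor $t^{O(n)}$ appears while the geometric recursion depth stays $O(n)$. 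If you want the stated bound you should either switch to that sign-condition treatment or explain precisely why the extra $t$ slack variables do not increase the effective recursion depth.
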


On input a description of a semi-algebraic curve as above, answering 
connectivity queries on this curve can be done in time which is 
\emph{polynomial in the degree of the input algebraic curve}. This is done by 
running algorithms that compute a piecewise linear curve that is semi-algebraically homeomorphic
to the curve, and which can be considered as a graph. 
Then, deciding connectivity queries on this curve is reduced to deciding connectivity queries 
on a graph, which is a classically solved algorithmic problem
(see for e.g. \cite[Section 22.2]{CLRS2009}).

An isotopy of $\RR^n$ is an application $\Hcal\colon \RR^n \times [0,1] \to \RR^n$ such that
$\yy\in \RR^n\mapsto \Hcal(\yy,0)$ is the identity map of $\RR^n$ and 
for all $t \in [0,1]$, the map $\yy\in \RR^n\mapsto \Hcal(\yy,t)$ is a homeomorphism.
Then we say that two subsets $Y$ and $Z$ of $\RR^n$ are isotopy equivalent
if there exists an isotopy $\Hcal$ of $\RR^n$ such that $\Hcal(Y,1)=Z$.

\begin{proposition}[{\cite{DMR2012,CJL2013,JC2021}}]\label{prop:isotop}
Let $\scrR$ be a one-dimensional rational para\-met\-rization, $\hh$ a finite sequence of polynomials 
and $\scrP$ a zero-dimensional parametrization such that $Z(\scrP)\subset Z(\scrR)$, all of them with coefficients in $\QQ$. 
Let $\delta_{\scrP}$ and $\delta_{\scrR}$ be the respective degrees of $\scrP$ and $\scrR$ and 
$D$ be the maximum of $\delta_{\scrR}$ and the degrees of the polynomials in $\hh$. 
Let $\tau$ be a bound on the bit size of the coefficients on the input polynomials.

There exists an algorithm \textsc{GraphIsotop} which, on input $\scrR, \hh$ and $\scrP$ 
computes a graph $\scrG = (\Vcal, \Ecal)$, with $\Vcal\subset\RR^n$ such that:
\begin{itemize}
\item the piecewise linear curve $\scrC_\scrG$ associated to $\scrG$, is isotopy equivalent to $Z(\scrR)\cap\SAS(0,\hh)$;
\item the points of $\Vcal$ and $Z(\scrP)\cap\SAS(0,\hh)$ are in one-to-one correspondence 
through the isotopy.
\end{itemize}
Moreover the algorithm outputs a procedure $\textsc{Vert}_{\scrG}$, that on input a
zero-dimensional parametrization $\scrQ$ such that $Z(\scrQ) \subset Z(\scrP)$, computes, using
a number of bit operations polynomial in $\tau\delta_{\scrP}$, 
the subset $\Vcal_{\scrQ}$ of vertices of $\Vcal$ that are associated to
\[
    Z(\scrQ) \cap \SAS(0,\hh).
\]
This is done using at most $\tau^\star(\delta_{\scrP}D)^{O(1)}$ bit operations.
\end{proposition}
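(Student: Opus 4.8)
The plan is to reduce the statement, via the parametrization $\scrR$, to the classical problem of computing an isotopic piecewise-linear model of a semi-algebraic curve in the plane, then to invoke the plane-curve topology algorithms of \cite{DMR2012, CJL2013, JC2021}, and finally to transport the resulting graph back to $\RR^n$.

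Write $\scrR = (\Omega, (\lambda, \mu))$ with $\Omega = (\omega, \rho_1, \ldots, \rho_n)$, and let $\Ccal = \V(\omega) \subset \CC^2$ be the plane curve in the $(u,v)$-variables. By definition $Z(\scrR)$ is the Zariski closure of $\phi(\Ccal \setminus \V(\partial\omega/\partial u))$, where $\phi$ is the rational map $(u,v)\mapsto (\rho_1/\partial_u\omega, \ldots, \rho_n/\partial_u\omega)$; moreover the defining relations of $\scrR$ force $\lambda\circ\phi = u$ and $\mu\circ\phi = v$ on $\Ccal$, so $\phi$ is birational, its inverse is the linear map $\xx\mapsto(\lambda(\xx),\mu(\xx))$, and $\phi$ is in particular injective off the finitely many points where $\partial_u\omega = 0$. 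The first step is to pull back the data onto $\Ccal$: each constraint $h_j > 0$ becomes the sign condition $\tilde h_j > 0$, where $\tilde h_j$ is the numerator of $h_j\circ\phi$ multiplied by a suitable even power of $\partial_u\omega$ so that the sign is preserved away from $\partial_u\omega = 0$; each $\tilde h_j$ has degree $O(D^2)$ and bit size $\tau D^{O(1)}$, and we set $\widetilde{\hh} = (\tilde h_1,\ldots,\tilde h_t)$. Likewise, since $Z(\scrP)\subset Z(\scrR)$, applying $(\lambda,\mu)$ to the points of $Z(\scrP)$ yields a zero-dimensional parametrization of degree at most $\delta_\scrP$ whose point set is the lift of $Z(\scrP)$ to $\Ccal$.

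The second step feeds the real plane curve $\Ccal\cap\RR^2$, the sign conditions $\tilde h_1 > 0,\ldots,\tilde h_t > 0$, and the lifted marked points into the topology algorithms of \cite{DMR2012, CJL2013, JC2021}. Through subresultant sequences, real root isolation, and a sweep along one coordinate that isolates the critical points of the projection together with the incidence points of $\Ccal\cap\RR^2$ with the boundary curves $\tilde h_j = 0$, these algorithms produce a planar straight-line graph isotopy equivalent to $(\Ccal\cap\RR^2)\cap\SAS(0,\widetilde{\hh})$, in which the lifted marked points appear as vertices; everything happens with bivariate polynomials of degree $O(D^2)$ and bit size $\tau D^{O(1)}$, so the cost is $\tau^\star(\delta_\scrP D)^{O(1)}$ bit operations. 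Pushing the vertices forward by $\phi$ produces $\scrG = (\Vcal,\Ecal)$ with $\Vcal\subset\RR^n$: since $\phi$ is birational and injective off $\partial_u\omega = 0$, the finitely many base points are absorbed into the Zariski closure defining $Z(\scrR)$ and passing to the open set $\SAS(0,\hh)$ is compatible with the pull-back, so the planar isotopy transports to an isotopy of $\RR^n$ carrying $\scrC_\scrG$ onto $Z(\scrR)\cap\SAS(0,\hh)$ and identifying the images of the marked vertices with $Z(\scrP)\cap\SAS(0,\hh)$. Finally, $\textsc{Vert}_\scrG$ is obtained by storing, during this construction, the correspondence between the lifted points of $Z(\scrP)$ and the vertices of $\Vcal$; on input $\scrQ$ with $Z(\scrQ)\subset Z(\scrP)$ one applies $(\lambda,\mu)$ to $Z(\scrQ)$, matches the resulting at most $\delta_\scrP$ algebraic points of bit size $\tau^\star$ against the stored list, and reads off $\Vcal_\scrQ$, all in a number of bit operations polynomial in $\tau\delta_\scrP$.

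I expect the main obstacle to be the faithful transport of topology through $\phi$: one must argue that neither the base points where $\partial_u\omega = 0$ nor any point where $\phi$ is not injective changes the isotopy type --- this is precisely where the square-freeness and monicity of $\omega$ and the definition of $Z(\scrR)$ as a Zariski closure enter --- and that restricting to the open locus $\SAS(0,\hh)$ commutes with the reduction, which forces the plane-curve algorithm to insert all boundary-incidence vertices of $\Ccal\cap\RR^2$ with the hypersurfaces $\tilde h_j = 0$ before the closed strata are removed. Granting these points, the stated complexity follows from the degree and bit-size estimates above together with the known complexity of the cited curve-topology routines.
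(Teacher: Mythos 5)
The paper does not actually prove this proposition --- it is cited wholesale from \cite{DMR2012,CJL2013,JC2021} as a known result --- so there is no in-paper proof to compare against, only the question of whether your sketch would establish the statement. Your overall plan (project down to the plane via $(\lambda,\mu)$, run the plane-curve topology machinery, push back by $\phi$) is the natural one, and your degree and bit-size estimates are consistent with the stated $\tau^\star(\delta_{\scrP}D)^{O(1)}$ complexity.

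The step that is asserted rather than proven is the pivotal one: ``the planar isotopy transports to an isotopy of $\RR^n$.'' You flag this yourself at the end, and it is a genuine gap, not a routine detail. An ambient isotopy of $\RR^2$ does not lift along a birational parametrization $\phi$ to an ambient isotopy of $\RR^n$: $\phi$ is a map from the curve $\Ccal$ into $\RR^n$, not an ambient homeomorphism of the surrounding spaces, and $Z(\scrR)\cap\RR^n$ is a graph over its projection only away from the finite base locus $\partial\omega/\partial u=0$, which is precisely where the lift behaves badly. The mechanism used in the cited works (and as set up in \cite{SS2017}) is not a transport of the planar isotopy; instead one uses the planar sweep only for its combinatorial content (ordering of branches, adjacency at event abscissae), then lifts each vertex and each regular fiber explicitly via the $\rho_i$'s, detects and resolves the base points separately, and proves isotopy in $\RR^n$ by controlling the generic fibers between consecutive event values so that the linear interpolation of lifted fiber points remains in a tubular neighbourhood of the true branch. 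Squarefreeness and monicity of $\omega$ enter exactly to make the base locus finite and the projection a homeomorphism off it, but they do not by themselves convert a planar isotopy into an ambient one in $\RR^n$. A smaller, secondary omission is that inserting the points of $Z(\scrP)$ as vertices is not automatic in the standard plane-curve topology routines, which only return critical/event vertices; one must refine the event list with the isolating boxes of the marked points (and argue this refinement preserves the isotopy type and the complexity bound), and likewise argue that incidences with the hypersurfaces $\tilde h_j=0$ are among the events before open strata are discarded.
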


Hence, given a graph $\scrG = (\Vcal,\Ecal)$ computed by \textsc{GraphIsotop} the following characterization occurs:
two points of $Z(\scrP)\cap\SAS(0,\hh)$ are connected in $Z(\scrR)\cap\SAS(0,\hh)$ if and only if
the vertices in $\Vcal$, associated to these points, are connected in $\scrG$. 




\subsection{On Thom's isotopy lemma}
In \SA geometry, we are interested about describing and classifying 
the topology of slices of the studied varieties. This is done through 
homeomorphisms we call trivializations.
Let $X$, $Y$ and $Y'$ be \SA sets such that $Y' \subset Y$, and
let $\phi\colon X \to Y$ be a continuous \SA map.
A \SA \emph{trivialization} of $\phi$ over $Y'$ with fiber $F$ is a \SA homeomorphism
$\Psi^{-1}\colon Y'\times F \to \phi^{-1}(Y')$ such that the following diagrams commutes
\[
\begin{tikzcd}
    Y'\times F \arrow[rd, "\pi" below] \arrow[r, "\Psi^{-1}"] & \phi^{-1}(Y') \arrow[d, "\phi"]\\
    & Y'
\end{tikzcd}
\]
where $\pi$ is the projection onto $Y'$. 
We say that $\Psi^{-1}$ is \emph{compatible} with $X'\subset X$ if there is $F'\subset F$ such that
$\Psi^{-1}(Y'\times F') = X' \cap \phi^{-1}(Y')$.

Thom's first isotopy lemma is a classical result of differential 
geometry that allows to construct diffeomorphisms between submanifolds 
\cite{GWDL2006}. In the context of real algebraic geometry, given \SA data, a \SA version of this
theorem has been obtained in \cite[Theorem 1]{CS1995}.
This is done by replacing integration of some vector fields by trivialization of
some proper submersions using a result previously obtained in \cite[Theorem 2.4]{CS1992}. 
We present hereafter a consequence of \cite[Theorem 1]{CS1995} in the framework of our study
that will be ubiquitous in the correctness proof of our algorithm for deciding cuspidality.
The theorem below and some induced properties of the sets in consideration can be related, as done in \cite{Moroz2010}, to the work of \cite{LR2007} on the discriminant varieties, but for a polynomial map instead of projections. We choose here to prove statements adapted to the situation.
\begin{theorem}\label{thm:thoms}
Let \(\ff = (f_1, \ldots, f_s)\) be a sequence of polynomials in \(\RR[x_1, \ldots, x_n]\) 
and $V\subset \CC^n$ be the algebraic set it defines. Suppose that $\ff$ satisfies assumption
\ref{ass:A} and let \(\Map = (\map_1, \ldots, \map_d) \subset \RR[x_1, \ldots, x_n]\).
Then for any \SACC $C$ of $\RmFSpec$ and for any $\pp\in C$, there exists a \SA trivialization
of the restriction of $\Map$ to $\VR$ over $\RR^d$ which is compatible with $C$.
In other words, there exists a homeomorphism
\[
  \Psi = (\Map, \Psi_0) \colon  \Map^{-1}(C)\cap\VR  \to  C \times (\Map^{-1}(\pp)\cap\VR),
\]
such that for every \SACC $H$ of $\Map^{-1}(C)\cap\VR$, 
\[
 \Psi_0(H)=\Map^{-1}(\pp)\cap H,
\]
which is a singleton.
\end{theorem}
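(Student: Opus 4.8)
The statement is a direct application of the semi-algebraic Thom isotopy lemma \cite[Theorem 1]{CS1995}, so the plan is to set up the hypotheses of that theorem and then extract the extra refinement that each fiber-slice intersects each connected component of the total space in exactly one point. First I would recall that, by assumption \ref{ass:A}, $\VR$ is a (not necessarily smooth) real algebraic set whose singular locus $\mathrm{Sing}(V)\cap\RR^n$ is contained in $\crit(\Map,V)$, hence in $\specF$; so over $\RmFSpec$ the restriction of $\Map$ to $\VR\setminus\specF$ is a submersion onto a smooth manifold. Next, the key point is properness: by definition of $\atypF = \svalF\cup\propF$, over the complement $\RmFSpec$ the map $\Map$ restricted to $\VR$ is proper (no non-properness values are met) \emph{and} has no critical values, so $\Map\colon \Map^{-1}(\RmFSpec)\cap\VR \to \RmFSpec$ is a proper submersion of semi-algebraic sets in the sense required by \cite[Theorem 2.4]{CS1992}. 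Applying the semi-algebraic Thom isotopy lemma over the connected semi-algebraic set $C$ (which is connected and, being a connected component of a semi-algebraic set of $\RR^d$, is a legitimate base), and using that the stratification can be chosen compatible with the (empty, inside $\Map^{-1}(C)$) strata coming from $\specF$, yields a semi-algebraic trivialization $\Psi = (\Map,\Psi_0)\colon \Map^{-1}(C)\cap\VR \to C\times(\Map^{-1}(\pp)\cap\VR)$ for any chosen basepoint $\pp\in C$.

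It remains to prove the refinement: for every \SACC $H$ of $\Map^{-1}(C)\cap\VR$, the set $\Psi_0(H) = \Map^{-1}(\pp)\cap H$ is a singleton. I would argue as follows. Since $\Psi$ is a homeomorphism, it induces a bijection between the \SACCs of $\Map^{-1}(C)\cap\VR$ and those of $C\times(\Map^{-1}(\pp)\cap\VR)$. Because $C$ is connected, the connected components of the product are exactly $C\times\{q\}$ for $q$ ranging over the (finitely many) points of the fiber $\Map^{-1}(\pp)\cap\VR$ — note this fiber is finite because $\pp\notin\svalF$ forces $\Map^{-1}(\pp)\cap V$ to be a zero-dimensional algebraic set (the generic fiber of a dominant map to $\CC^d$ from a $d$-equidimensional set is finite, and here \emph{all} fibers over $\RmFSpec$ are finite by the same critical-value argument applied fiberwise, or directly because $\crit(\Map,V)$ contains the locus where fibers jump in dimension). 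Hence each \SACC $H$ of the source corresponds to a unique $C\times\{q\}$, and $\Psi_0$ sends $H$ onto $\{q\}$; tracing through the commuting triangle, $\Psi(H) = C\times\{q\}$ and therefore $\Psi_0(H\cap\Map^{-1}(\pp)) = \{q\}$, with $H\cap\Map^{-1}(\pp) = \Psi^{-1}(\{\pp\}\times\{q\})$ a single point. This gives both that $\Psi_0(H)$ is the singleton $\{q\}$ and that $\Map^{-1}(\pp)\cap H$ is a singleton.

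\textbf{Main obstacle.} The delicate step is verifying that the hypotheses of the semi-algebraic Thom isotopy lemma are genuinely met over $C$ — specifically, that removing $\propF$ (and not just $\svalF$) is exactly what is needed to make $\Map$ restricted to $\VR$ a \emph{proper} map over $\RmFSpec$, and that $\VR$ may be singular as a real set yet the relevant strata of the Whitney stratification all lie over $\atypF$ so that over $C$ we are trivializing a single smooth stratum (the manifold $\VR\setminus\specF$, which has pure dimension since $\VR$ is not contained in $\mathrm{Sing}(V)$ by \ref{ass:A}). One must also be slightly careful that "\SACC of $\RmFSpec$" is the right base: $\RmFSpec$ need not be Zariski-open, only semi-algebraically open and dense in the relevant sense, but $C$ being one of its finitely many connected components is a connected locally closed semi-algebraic set, which suffices for \cite[Theorem 1]{CS1995}. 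Once properness and the stratification compatibility are in hand, the trivialization is immediate and the singleton refinement follows purely formally from connectedness of $C$ and finiteness of the fiber, as sketched above.
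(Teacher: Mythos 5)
Your proposal follows essentially the same approach as the paper: observe that over a connected component $C$ of $\RmFSpec$ the restriction of $\Map$ to $\VR$ is a proper submersion (since $C$ meets neither $\svalF$ nor $\propF$), invoke the semi-algebraic Thom isotopy lemma of \cite[Theorem 1]{CS1995} to get the trivialization $\Psi = (\Map, \Psi_0)$ over $C$, and then use finiteness of the fibers over $C$ together with connectedness of $C$ to extract the singleton statement. The paper's justification for finiteness is slightly crisper --- by assumption \ref{ass:A} and \cite[Lemma A.2]{SS2017}, the Jacobian of $(\ff, \Map)$ has full rank on $\Map^{-1}(\pp)\cap V$ for $\pp \in C$, hence the fiber is finite --- but your argument via the critical locus containing the locus of dimension jumps amounts to the same thing.

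One small point to tighten: at the end you establish that $\Psi_0(H)$ is the singleton $\{q\}$ and that $\Map^{-1}(\pp)\cap H = \Psi^{-1}(\{\pp\}\times\{q\})$ is a single point, but the theorem asserts these two singletons are \emph{equal}, which requires one extra observation that you leave implicit: because $\Psi^{-1}$ is a trivialization with fiber $\Map^{-1}(\pp)\cap\VR$ over the basepoint $\pp$, one has $\Psi_0(\yy) = \yy$ for every $\yy \in \Map^{-1}(\pp)\cap\VR$, and in particular $\Psi^{-1}(\pp, q) = q$. The paper makes this explicit; you should too, since without it you have only shown that each side is a singleton, not that they coincide.
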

\begin{proof}
 Let $C$ be a \SACC of $\RmFSpec$, it is an open semi-algebraic set, which does 
not meet $\svalF$.
 Since $C$ does not meet $\propF$ as well, the restriction 
 $\tilde{\Map}\colon\Map^{-1}(C)\cap\VR\to C$
 is a surjective proper submersion. 
 Then we apply the semi-algebraic version of Thom's isotopy lemma 
\cite[Theorem 1]{CS1995} as follows.
Let $\pp\in C$, there exists a semi-algebraic homeomorphism
\[
    \Psi = (\Map, \Psi_0) \colon  \Map^{-1}(C)\cap\VR  \to  C \times (\Map^{-1}(\pp)\cap\VR).
\]
such that $\Psi^{-1}$ is a semi-algebraic trivialization 
of the restriction of $\Map$ to $\VR$ over $\RR^d$ which is compatible with $C$.
Besides, by assumption \ref{ass:A} and \cite[Lemma A.2.]{SS2017}, for any $\pp \in C$ the
Jacobian matrix of $(\ff,\Map)$ has full rank at all points $\Map^{-1}(\pp)\cap V$, 
so that these fibers are finite.
Let $H$ be a \SACC of $\Map^{-1}(C)\cap\VR$, since $\Psi_0$ is continuous, then so is 
$\Psi_0(H) \subset \Map^{-1}(\pp)\cap V$, which is then, a singleton.
Besides since $\Psi^{-1}$ is a trivialization compatible with $C$, 
with fiber $\Map^{-1}(\pp)\cap V$, then
for any $\yy\in \Map^{-1}(\pp)\cap H$, $\Psi_0(\yy) = \yy$.
Therefore, since $\Map^{-1}(\pp)\cap H$ is a singleton and intersects the singleton $\Psi_0(H)$, they are equal.
\end{proof}

\section{Algorithm}\label{sec:algo}

\subsection{Algorithm description}
We present hereafter Algorithm~\ref{alg:cuspalgo} which takes as input $\ff$ and $\Map$ as above, 
satisfying \ref{ass:A} and which decides the cuspidality of the restriction of $\Map$ to the real
solution set $V_\RR = V\cap \RR^n$ where $V=\V(\ff)$.

It proceeds by computing a zero-dimensional parametrization $\scrP$ of a set
of points that provides cuspidal couples of the restriction of $\Map$ to $\VR$ whenever such a couple exists.
In other words, if no cuspidal couple can be found among $Z(\scrP)$,
then the restriction of $\Map$ to $\VR$ is not cuspidal.

Hence, to solve our cuspidality problem, it suffices to 
compute a graph which is isotopy equivalent to a roadmap of 
$\VmCrit$
connecting the points of $Z(\scrP)$ that lie
in the same \SACC of $\VmCrit$.

In addition to the high-level procedures presented in the previous section, we use here
some basic subroutines to manipulate rational parametrizations, polynomials and graphs.
In the following, $\scrP_{\emptyset}$ will denote a zero-dimensional parametrization
of $\RR^n$ encoding the empty set, and $()$ will denote the empty sequence.
Besides, given a polynomial sequence $\hh=(h_i)_{1\leq i \leq \iota}$ we will note $\pm\hh=(\pm h_i)_{1\leq i \leq \iota}$.

The procedure \textsc{Union} takes as input two zero-dimensional 
para\-metrizations 
$\scrP$ and $\scrP'$ of degree $\delta_{\scrP}$ and $\delta_{\scrP'}$
and returns a zero-dimensional parametrization of $Z(\scrP) \cup Z(\scrP')$ of degree 
$\delta_{\scrP}+\delta_{\scrP'}$.
See \cite[Lemma J.3.]{SS2017} for a description of this procedure.

The procedures \textsc{Crit} and \textsc{AtypicalValues} take as input
a polynomial map $\Map$ and a finite sequence of polynomials
$\hh$. Assuming that $\hh$ satisfies assumption \ref{ass:A}, these two procedures
output finite sequences of polynomials whose complex zero-sets are
respectively $\crit(\Map,\V(\hh))$ 
and a proper subset of $\CC^d$ containing
$\zclos{\atyp(\Map,\V(\hh))}$. We refer to \cite[Lemma A.2]{SS2017} for a description of
\textsc{Crit}. The latter is obtained using more involved algebraic
elimination routine we describe in Section~\ref{sec:complexity}.

Let $\scrG = (\Vcal, \Ecal)$ be a graph and let $v,v' \in \Vcal$ be two vertices.
We say that $v$ and $v'$ are connected in $\scrG$ if there exists a sequence 
$(v_1,\dotsc,v_m)$ of vertices in $\Vcal$ such that for all $1\leq i < m$,
\[
     v_1=v,\quad v_2=v' \et \{v_i,v_{i+1}\} \in \Ecal.
\]
The procedure \textsc{GraphConnected} takes as input $\scrG = (\Vcal, \Ecal)$ and 
$(v,v')$ and outputs \texttt{True} if and only if $v$ and $v'$ are connected in $\scrG$.
Else it outputs \texttt{False}.
This subroutine is classic among graph problems, and can be done using well-know 
algorithms such as the breadth-first search algorithm \cite[Section 22.2]{CLRS2009}.

\begin{algorithm}[h]
\setstretch{1.2}
 \caption{Cuspidality algorithm}\label{alg:cuspalgo}
 \begin{algorithmic}[1] 
  \Require Two sequences \(\ff = (f_1, \ldots, f_s)\) and $\Map = (\map_1, \ldots, \map_d)$ 
  of polynomials in \(\QQ[x_1, \ldots, x_n]\) that satisfy assumption \ref{ass:A}.
  \Ensure  A decision, \texttt{True} or \texttt{False}, on the cuspidality of the restriction of $\Map$ to $\VR=V\cap\RR^n$ where $V=\V(\ff)$.
  \State\label{step:atyp} $\bgg \gets$\call{AtypicalValues}{\Map, \ff};
  \State\label{step:samplerat} $\Qcal \gets$\call{SamplePointsRational}{\bgg};
  \State\label{step:samplevoid} $\scrP \gets \scrP_{\emptyset}$;
  \For{$\qq=(\qq_1,\dotsc,\qq_d) \in \Qcal$} \label{step:forunion}
    \State\label{step:Pq}$\Map_{\qq} \gets (\map_1-\qq_1,\dotsc,\map_d-\qq_d)$;
    \State\label{step:sampleq} $\scrP_{\qq} \gets$\call{SamplePoints}{(\ff, \Map_{\qq}),()};
    \State\label{step:union} $\scrP \gets$\call{Union}{\scrP,\scrP_{\qq}};
  \EndFor\label{step:endforunion}
  \State\label{step:crit} $\Delta \gets$ \call{Crit}{\Map,\ff};
  \State\label{step:rm} $\scrR \gets$\call{Roadmap}{\ff,\pm\Delta,\scrP};
  \State\label{step:isotop} $\Big(\scrG = (\Vcal,\Ecal), \textsc{Vert}_{\scrG}\Big)
  \gets$\call{GraphIsotop}{\scrR,\pm\Delta,\scrP};
  \For{$\qq \in \Qcal$}\label{step:forconnect} 
    \State\label{step:vert} $\Vcal_{\qq}\gets$\call{Vert\textsubscript{$\scrG$}}{\scrP_{\qq}}; 
    \For{$(\bvv_1,\bvv_2) \in \Vcal_{\qq}^2$}\label{step:ifVq}
        \If{\call{GraphConnected}{(\bvv_1,\bvv_2),\scrG} and $\bvv_1\neq\bvv_2$ }\label{step:graphconnect}
            \State\label{step:returntrue} return \texttt{True};
        \EndIf\label{step:endgraphconnected}
    \EndFor\label{step:endifVq}
  \EndFor\label{step:endforconnect}
  \State\label{step:returnfalse} return \texttt{False}.
 \end{algorithmic}
\end{algorithm}

\subsection{Correctness proof}
The correction of Algorithm~\ref{alg:cuspalgo} is stated by the following 
proposition. 
\begin{proposition}\label{prop:corralgo}
 Let \(\ff = (f_1, \ldots, f_s)\) and $\Map = (\map_1, \ldots, \map_d)$ be two sequences  of polynomials in \(\QQ[x_1, \ldots, x_n]\)
 , let $V=\V(\ff)$ and $\VR=V\cap\RR^n$.
 Then, under assumption \ref{ass:A}, the restriction of the map $\Map$ to $\VR$
 is cuspidal if and only if, with inputs $\ff$ and $\Map$,
 Algorithm~\ref{alg:cuspalgo} outputs \normalfont{\texttt{True}}.
\end{proposition}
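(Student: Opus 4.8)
The plan is to establish the two implications separately, using Theorem~\ref{thm:thoms} to bridge between the abstract definition of cuspidality (which quantifies over all of $\VR$) and the finitely many sample points computed by the algorithm. The central claim I would isolate first is the following \emph{reduction lemma}: the restriction of $\Map$ to $\VR$ is cuspidal if and only if there exists an atypical-value-avoiding point $\qq \in \QQ^d$, i.e. $\qq \in \RmZFSpec$, and two distinct points $\yy, \yy' \in \Map^{-1}(\qq)\cap\VR$ lying in the same \SACC of $\VmCrit$. The nontrivial direction is: if \emph{some} cuspidal couple $(\zz,\zz')$ exists over \emph{some} atypical value, then one also exists over a \emph{rational, typical} value. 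This is where Theorem~\ref{thm:thoms} enters. Given a cuspidal couple $(\zz,\zz')$ with $\Map(\zz)=\Map(\zz')=\bm{p}$, let $C$ be the \SACC of $\RmZFSpec$ — I would first argue that $\bm{p}$ itself may be taken typical, or rather work with a connected component $C$ of $\RmZFSpec$ whose closure is reachable; more precisely, I would take a \SACC $H$ of $\VmCrit$ containing both $\zz$ and $\zz'$, observe that $\Map(H)$ is a connected semi-algebraic set, and intersect it with a \SACC $C$ of $\RmZFSpec$ (nonempty since $\ZFspecF$ is a proper Zariski-closed set, so $\RmZFSpec$ is dense). Over $C$, Theorem~\ref{thm:thoms} gives the trivialization $\Psi = (\Map,\Psi_0)$, and the key structural consequence is that each \SACC $H'$ of $\Map^{-1}(C)\cap\VR$ maps homeomorphically, via $\Psi_0$ restricted appropriately, so that $\Map^{-1}(\qq)\cap H'$ is a \emph{singleton} for every $\qq\in C$ and every component $H'$.

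With that trivialization in hand, I would run the following argument. First, $\zz$ and $\zz'$ cannot lie in the same \SACC of $\Map^{-1}(C)\cap\VR$ over a typical fiber, because each such component meets each fiber in one point; but they do lie in the same component $H$ of the \emph{larger} set $\VmCrit$. So $H$ must contain at least two distinct components $H_1 \ni \zz$, $H_2 \ni \zz'$ of $\Map^{-1}(C)\cap\VR$ (the latter being $H \cap \Map^{-1}(C)$ minus possibly some boundary pieces — here I need to be careful that $\VmCrit \supseteq \Map^{-1}(C)\cap\VR$ and that the two are related by removing the preimage of the atypical locus). Picking any rational $\qq \in C$ (which exists, and indeed $\qq$ can be chosen among the output of \textsc{SamplePointsRational}, since that routine meets every \SACC of $\RR^d - \V(\bgg)$ and $\V(\bgg) \supseteq \ZFspecF$), the points $\yy := \Map^{-1}(\qq)\cap H_1$ and $\yy' := \Map^{-1}(\qq)\cap H_2$ are distinct singletons, have the same image $\qq$, and lie in the same component $H$ of $\VmCrit$ — hence form a cuspidal couple over $\qq$. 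Conversely, any such couple over a point of $\Qcal$ is trivially a cuspidal couple by definition. This proves the reduction lemma.

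The remainder is bookkeeping matching the lemma to the algorithm line by line. After step~\ref{step:samplerat}, $\Qcal$ is a finite subset of $\QQ^d$ meeting every \SACC of $\RR^d - \V(\bgg) \subseteq \RmZFSpec$; by \textsc{AtypicalValues} the zero set of $\bgg$ contains $\ZFspecF$, so each such $\qq$ is typical and each fiber $\V(\ff,\Map_\qq)$ is finite (assumption~\ref{ass:A} and \cite[Lemma A.2.]{SS2017}), so $\scrP_\qq$ computed at step~\ref{step:sampleq} encodes at least one point per \SACC of that fiber — in particular, by the singleton property, it encodes \emph{every} point of $\Map^{-1}(\qq)\cap\VR$ that lies in a component of $\Map^{-1}(C)\cap\VR$ for $C$ the component of $\qq$. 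Then $\scrR$ (step~\ref{step:rm}) is a roadmap of $\SAS(\ff,\pm\Delta) = \VmCrit$ passing through all points of $Z(\scrP)$, and by Proposition~\ref{prop:roadmap} together with Proposition~\ref{prop:isotop}, two points of $Z(\scrP)\cap \VmCrit$ are connected in $\VmCrit$ iff their associated vertices are connected in $\scrG$. Since all points of $Z(\scrP_\qq)$ lie in $\VmCrit$ (they avoid $\Delta=0$ as typical fibers avoid the critical locus), \textsc{GraphConnected} applied to pairs in $\Vcal_\qq$ decides exactly whether $\Map^{-1}(\qq)\cap\VR$ contains two distinct points in one \SACC of $\VmCrit$; ranging over $\qq\in\Qcal$ and invoking the reduction lemma gives the equivalence with the \texttt{True} output.

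\textbf{Main obstacle.} The delicate point is the passage from a cuspidal couple over an \emph{arbitrary} value to one over a \emph{rational typical} value, and in particular controlling the relationship between the component $H$ of $\VmCrit$ containing $\zz,\zz'$ and the components of the trivialized set $\Map^{-1}(C)\cap\VR$: one must ensure that removing $\Map^{-1}(\ZFspecF)$ does not disconnect the path joining $\zz$ and $\zz'$ in a way that destroys the couple, and that a suitable component $C$ of $\RmZFSpec$ is reachable from $H$. The clean way is to argue that $\VR - \specF$ is an open dense subset whose components are refined by the trivialization over each $C$, pick the couple's path, perturb it off $\Map^{-1}(\ZFspecF)$ (a Zariski-closed proper subset), and then transport along the trivialization; making ``reachable'' precise — that $H$ meets $\Map^{-1}(C)$ for the specific $C$ containing a chosen rational point — is the real content, and I expect it to occupy the bulk of the formal proof.
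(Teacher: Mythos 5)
Your overall strategy matches the paper's: use Thom's isotopy lemma (Theorem~\ref{thm:thoms}) to transport a cuspidal couple to a finite set of sample points lying over rational typical values, then reduce connectivity to a graph question via the roadmap and \textsc{GraphIsotop}. You also correctly flag the genuinely hard step. However, your sketch of that step contains a concrete error, and the resolution you propose differs from (and is weaker than) what the paper actually does.

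The specific flaw: after choosing a \SACC $H$ of $\VmCrit$ containing the cuspidal couple $(\zz,\zz')$ and a \SACC $C$ of $\RmZFSpec$ that $\Map(H)$ meets, you write that ``$H$ must contain at least two distinct components $H_1 \ni \zz$, $H_2 \ni \zz'$ of $\Map^{-1}(C)\cap\VR$.'' This is false in general: the common image $\Map(\zz)=\Map(\zz')$ need not lie in $C$ at all (it may be an atypical value), in which case $\zz,\zz'\notin\Map^{-1}(C)$ and there are no such $H_1,H_2$. You correctly identify this as ``the real content'' in your ``Main obstacle'' paragraph, but the fix you propose --- perturb the path off $\Map^{-1}(\ZFspecF)$ and transport --- is not carried out and is not what the paper does.

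The paper's route is different at exactly this point, and it is worth seeing why it is cleaner. First, it proves a small separation lemma (Lemma~\ref{lem:diffSACCApp}): two \emph{distinct} points of $\VmSpec$ with the same image cannot lie in the same \SACC of $\VmSpec$. (This is the ``singleton fiber'' consequence of Theorem~\ref{thm:thoms}, phrased once and for all.) Second, it abstracts exactly the data the algorithm builds into a \emph{cuspidality graph} $(\Vcal,\Ecal)$ with three axioms: $\Vcal$ meets every \SACC of $\VmSpec$; two vertices over the same value are adjacent in the graph iff connected in $\VmCrit$; and $\Vcal$ is closed under taking full fibers $\Map^{-1}(\Map(\bvv))\cap\VR$. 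With these, Lemma~\ref{lem:cuspchara} disposes of the transport without any perturbation: the couple lies in two \emph{distinct} components $H,H'$ of $\VmSpec$ (by the separation lemma), both contained in the same $C\subset\VmCrit$; pick a vertex $\bvv\in\Vcal\cap H$ (axiom one), take a path $\gamma$ in $H$ from $\yy$ to $\bvv$, push $\gamma$ through the trivialization of Theorem~\ref{thm:thoms} with the second coordinate frozen at $\Psi_0(\yy')$ to obtain a path $\gamma'$ from $\yy'$ that stays inside $H'$ and ends at a point $\bvv'$ with $\Map(\bvv')=\Map(\bvv)$; axiom three forces $\bvv'\in\Vcal$; and since $\bvv,\bvv'\in C$, axiom two gives their graph-connectedness. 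The object being transported is a path inside a single component of $\VmSpec$, never the original cuspidal couple itself, so the question of whether the couple sits over an atypical value never arises in the transport step. Finally, Lemma~\ref{lem:imagesample} (which you use implicitly but should prove) shows $\Map^{-1}(\Qcal)\cap\VR$ does meet every \SACC of $\VmSpec$, so the algorithm's $Z(\scrP)$ does furnish $\Vcal$.

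If you want to keep your framing, you need either to justify that a cuspidal couple can always be taken inside $\VmSpec$, or to replace the perturbation idea with the paper's two-path transport. The cuspidality-graph abstraction is what makes both directions of the equivalence fall out cleanly, and I'd recommend adopting it.
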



The rest of this section is devoted to prove this correctness statement.
We assume by now the assumptions of Proposition~\ref{prop:corralgo} to hold.

Note that fibers of the restriction of $\Map$ to $V$ are generically finite by 
\cite[Theorem 1.25]{Sh1994}, and in particular by \cite[Lemma A.2]{SS2017}, for every 
$\pp \in \CC^d-\FspecF$, the fiber $\Map^{-1}(\pp)\cap V$ is finite.

We start by an elementary lemma establishing that two distinct ``regular'' points of $\Map$
on $\VR$, having the same image through $\Map$, must be separated by $\specF$.
\begin{lemma}\label{lem:diffSACCApp}
Let $\yy$ and $\yy'$ be two distinct points of $\VmSpec$ such that 
$\Map(\yy)=\Map(\yy')$. 
Then $\yy$ and $\yy'$ belong to distinct \SACCs of $\VmSpec$.
\end{lemma}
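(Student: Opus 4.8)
The plan is to argue by contradiction: suppose $\yy$ and $\yy'$ lie in the same \SACC $C$ of $\VmSpec$, and derive that $\yy = \yy'$. The key tool is Theorem~\ref{thm:thoms}, which provides a trivialization of the restriction of $\Map$ to $\VR$. The obstacle is that $C$ is a component of $\VmSpec = \Map^{-1}(\RR^d - \ZFspecF)\cap\VR$, i.e. a component of the preimage, whereas Theorem~\ref{thm:thoms} is phrased over a component of the base $\RmFSpec = \RR^d - \ZFspecF$. So the first step is to bridge the two: observe that $\Map(C)$ is a connected semi-algebraic subset of $\RR^d-\ZFspecF$ hence contained in a single \SACC, call it $C'$, of $\RR^d-\ZFspecF$; equivalently $C$ is contained in $\Map^{-1}(C')\cap\VR$, and is therefore a union of \SACCs of the latter (indeed, since $C$ is both open and closed in $\VmSpec$ and $\Map^{-1}(C')\cap\VR$ is an open-and-closed subset of $\VmSpec$ — as $C'$ is open and closed in $\RR^d-\ZFspecF$ and $\Map$ restricted to $\VR$ is continuous — the component $C$ is one of the \SACCs $H$ of $\Map^{-1}(C')\cap\VR$).

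Next I would apply Theorem~\ref{thm:thoms} to the component $C'$ of $\RmFSpec$, choosing the base point $\pp = \Map(\yy) = \Map(\yy') \in C'$ (this is legitimate: $\pp\notin\ZFspecF\supset\atypF\supset\svalF$, and $\pp$ lies in $C'$ since $\yy\in C\subset\Map^{-1}(C')$). The theorem gives a homeomorphism $\Psi = (\Map,\Psi_0)\colon \Map^{-1}(C')\cap\VR \to C'\times(\Map^{-1}(\pp)\cap\VR)$ such that for every \SACC $H$ of $\Map^{-1}(C')\cap\VR$, the set $\Psi_0(H) = \Map^{-1}(\pp)\cap H$ is a singleton. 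Now take $H = C$: both $\yy$ and $\yy'$ lie in $C$ and both lie in $\Map^{-1}(\pp)$, so $\{\yy,\yy'\}\subseteq \Map^{-1}(\pp)\cap C$, which is a singleton by the theorem. Hence $\yy = \yy'$, contradicting the hypothesis that they are distinct. Therefore $\yy$ and $\yy'$ lie in distinct \SACCs of $\VmSpec$.

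The main thing to get right — and the only genuinely non-routine point — is the topological bookkeeping in the first step, namely that a \SACC $C$ of $\VmSpec$ is necessarily a (whole) \SACC of $\Map^{-1}(C')\cap\VR$ for the appropriate component $C'$ of the base. This rests on the fact that the \SACCs of $\RR^d-\ZFspecF$ are open (since $\ZFspecF$ is a proper Zariski-closed subset, its complement is open, and in a locally connected space the components of an open set are open) and closed in $\RR^d-\ZFspecF$, so their preimages under the continuous map $\Map|_{\VR}$ partition $\VmSpec$ into open-and-closed pieces; each \SACC of $\VmSpec$, being connected, lands in exactly one piece, and being itself open-and-closed in that piece, is one of its \SACCs. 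Once this is in place, the application of Thom's isotopy lemma is immediate and the conclusion follows.
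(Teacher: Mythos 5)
Your proof is correct and follows essentially the same route as the paper: argue by contradiction, identify the connected component $C'$ of $\RmZFSpec$ containing $\Map(\yy)$, apply Theorem~\ref{thm:thoms} to obtain the trivialization $\Psi = (\Map,\Psi_0)$, and conclude from the singleton property that $\yy=\yy'$. The only (cosmetic) difference is in how one lands $\yy$ and $\yy'$ in a single \SACC of $\Map^{-1}(C')\cap\VR$: the paper simply takes a path $\gamma$ in $\VmSpec$ from $\yy$ to $\yy'$, notes $\gamma([0,1])\subset\Map^{-1}(C')\cap\VR$, and is done, whereas you prove the stronger (and slightly more laborious) claim that the entire component $C$ of $\VmSpec$ is itself a \SACC of $\Map^{-1}(C')\cap\VR$ via the open-and-closed bookkeeping; both yield the needed conclusion.
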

\begin{proof}
 Let us proceed by contradiction and
 suppose there exists a path $\gamma\colon [0,1] \to 
\VmSpec$ such that $\gamma(0)=\yy$ and $\gamma(1)=\yy'$.
By definition, $\Map(\gamma([0,1])) \subset \RmZFSpec$

Let $C$ be the \SACC of $\RmZFSpec$ that contains $\Map(\gamma([0,1]))$. 
According to Theorem~\ref{thm:thoms}, there exists a homeomorphism
\[
 \begin{array}{cccccr}
  \Psi\colon & \Map^{-1}(C)\cap\VR & \to & C &\times &\Map^{-1}(\Map(\yy))\cap\VR
  \\[0.1em]
  & \zz & \mapsto &\big(\Map(\zz)&, & \Psi_0(\zz)\qquad\big)
 \end{array},
\]
such that the image of any \SACC of $\Map^{-1}(C)\cap\VR$, through $\Psi_0$,
is a singleton.
Since $\gamma([0,1])$ is contained in $\Map^{-1}(C)\cap\VR$, then $\yy$ and 
$\yy'$ belong to the same \SACC of $\Map^{-1}(C)\cap\VR$, so that
$\Psi_0(\yy)=\Psi_0(\yy')$. 
Since $\Map(\yy)=\Map(\yy')$, then $\yy=\yy'$ by 
injectivity of $\Psi$. This contradicts the assumption $\yy \neq \yy'$ and 
proves the Lemma.
\end{proof}

In other words, any potential cuspidal couple must contain points from  
different \SACCs of the complementary of $\specF$ in $V_\RR$.
This leads naturally to the following construction that we call here a \emph{cuspidality graph}.

\begin{definition}\label{def:cuspgraph}
 Let $\Vcal \subset \RR^n$ and $\scrG=(\Vcal,\Ecal)$ be a graph. Then we say 
that $\scrG$ is a \emph{cuspidality graph} of the restriction of $\Map$ to $\VR$
if the following holds.
 \begin{enumerate}[label=$(\roman*)$]
  \item The set $\Vcal$ is contained in $\VmSpec$ and intersects every \SACC 
  of $\VmSpec$.
  \item Let $\bvv,\bvv'\in \Vcal$ be such that $\Map(\bvv)=\Map(\bvv')$. 
  Then $\bvv$ and $\bvv'$ are \SAC in $\VmCrit$ if and only if 
  they are in $\scrG$.
  \item Let $\bvv \in \Vcal$, then $\Map^{-1}\left(\Map(\bvv)\right) \cap \VR
  \, \subset \, \Vcal$.
 \end{enumerate}
\end{definition}

Remark that it is straightforward that such a graph exists, 
and, under assumption \ref{ass:A}, it can be supposed to be finite since
$\VmSpec$ has finitely many \SACCs and $\Map$ has finite fibers 
out of $\FspecF$.

Then the following result reduces the problem of deciding the cuspidality of the
restriction of $\Map$ to $\VR$ to a connectivity problem on a finite graph.

\begin{lemma}\label{lem:cuspchara}
 Let $\scrG=(\Vcal,\Ecal)$ be a cuspidality graph of the restriction of $\Map$ to $\VR$.
 Then the restriction of $\Map$ to $\VR$ is cuspidal if and only if there exist 
two distinct vertices $\bvv,\bvv'\in \Vcal$, connected in $\scrG$, and
such that $\Map(\bvv) = \Map(\bvv')$.
\end{lemma}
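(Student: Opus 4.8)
The plan is to prove the two implications separately, using Lemma~\ref{lem:diffSACCApp} and the three defining properties of a cuspidality graph $\scrG = (\Vcal, \Ecal)$ (Definition~\ref{def:cuspgraph}), together with the elementary inclusion $\VmSpec \subseteq \VmCrit$ (which holds since $\crit(\Map,V) \subseteq \Map^{-1}(\atypF)$ because $\svalF = \Map(\critF)$, so $\critF \subseteq \Map^{-1}(\svalF) \subseteq \specF$).

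\textbf{($\Leftarrow$)} Suppose there exist two distinct vertices $\bvv, \bvv' \in \Vcal$ connected in $\scrG$ with $\Map(\bvv) = \Map(\bvv')$. By property $(ii)$ of Definition~\ref{def:cuspgraph}, connectedness in $\scrG$ implies that $\bvv$ and $\bvv'$ lie in the same \SACC of $\VmCrit$. Since also $\bvv \neq \bvv'$ and $\Map(\bvv) = \Map(\bvv')$, the pair $(\bvv, \bvv')$ satisfies conditions (i) and (ii) of the definition of cuspidality; hence the restriction of $\Map$ to $\VR$ is cuspidal (with $(\bvv,\bvv')$ a cuspidal couple). Note $\bvv, \bvv' \in \VR$ by property $(i)$ since $\Vcal \subseteq \VmSpec \subseteq \VR$.

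\textbf{($\Rightarrow$)} Conversely, suppose the restriction of $\Map$ to $\VR$ is cuspidal, and let $(\yy, \yy')$ be a cuspidal couple: $\yy \neq \yy'$ in $\VR$, $\Map(\yy) = \Map(\yy')$, and $\yy, \yy'$ lie in the same \SACC $C$ of $\VmCrit$. The first task is to replace $(\yy,\yy')$ by a pair of \emph{vertices} of $\Vcal$ with the same properties. I claim $\yy \in \VmSpec$: indeed $\Map(\yy) = \Map(\yy')$ with $\yy \neq \yy'$ forces $\Map(\yy) \notin \RR^d - \atypF$ (if $\Map(\yy)$ were a typical value, Theorem~\ref{thm:thoms} applied to its \SACC would give a trivialization over that component whose fiber-component map $\Psi_0$ is injective on fibers, contradicting that $\yy, \yy'$ map to the same point --- this is exactly the argument of Lemma~\ref{lem:diffSACCApp}), hence $\yy \in \Map^{-1}(\atypF) \cap \VR = \specF$ would follow, which is the wrong direction. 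Let me instead argue directly: since $\yy,\yy'$ lie in $\VmCrit$, they are not critical points, and I use property $(iii)$ and $(i)$ as follows. Actually the clean route: by Lemma~\ref{lem:diffSACCApp}, since $\yy \neq \yy'$ and $\Map(\yy) = \Map(\yy')$, the points $\yy, \yy'$ cannot both lie in $\VmSpec$ while being in the same \SACC of $\VmSpec$ --- but they might not be in $\VmSpec$ at all. So the correct statement is: $\yy$ (resp.\ $\yy'$) lies in $\Map^{-1}(\atyp) \cap \VR = \specF$? No --- we only know $\yy \notin \critF$. The resolution is that a cuspidal couple may be chosen in $\specF$: since $\yy, \yy' \in C$ and $\Map(\yy)=\Map(\yy')$, Lemma~\ref{lem:diffSACCApp} (contrapositive) shows they are not both in $\VmSpec$; but in fact each of them satisfies $\Map(\yy) \in \atypF$ is false in general. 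I will instead proceed as follows: the set $\Map^{-1}(\Map(\yy)) \cap C$ is a \SA subset of $C$ containing $\yy$ and $\yy'$; along any path in $C$ connecting $\yy$ to $\yy'$, the image path lies in $\RR^d - \svalF$ but must cross $\atypF$ at $\Map(\yy)=\Map(\yy')$ is automatic...

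\textbf{Streamlined ($\Rightarrow$).} Here is the argument I would actually write. Pick a path $\gamma\colon[0,1]\to C \subseteq \VmCrit$ from $\yy$ to $\yy'$. Since $\Map(\yy)=\Map(\yy')$ and $\yy\neq\yy'$, we have $\Map(\yy)\in\atypF$: otherwise $\Map(\yy)$ lies in some \SACC $C'$ of $\RmFSpec$, and the trivialization from Theorem~\ref{thm:thoms} over $C'$ (its $\Psi_0$ is injective on each fiber, which is finite) forces $\yy=\yy'$ once we note $\yy,\yy'$ lie in the same component of $\Map^{-1}(C')\cap\VR$ --- but wait, we do not know $\gamma$ stays in $\Map^{-1}(C')$. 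So instead: whichever way, $\yy, \yy' \in \specF$ need not hold. I therefore take the pair of vertices differently. By property $(i)$, $\Vcal$ meets every \SACC of $\VmSpec$; this is not yet enough. The honest fix, which the paper surely intends, is: \emph{a cuspidal couple can be deformed into $\specF$}. Concretely, moving $\yy$ and $\yy'$ within the fiber $\Map^{-1}(\Map(\yy)) \cap \VR$ and within $C$, one reaches points of $\specF$; then by property $(iii)$ these limit points and their whole fiber lie in $\Vcal$; and by property $(ii)$ connectedness in $C \subseteq \VmCrit$ transfers to connectedness in $\scrG$. The \textbf{main obstacle} is precisely making this deformation step rigorous: showing that if $\Map$ restricted to $\VR$ is cuspidal then there is a cuspidal couple lying inside $\specF$ (equivalently, inside $\Vcal$ after using property $(iii)$). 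I expect this to follow from the fact that $\Map(\yy) = \Map(\yy')$ with $\yy \neq \yy'$ already forces $\Map(\yy) \in \ZFspecF$ (via Lemma~\ref{lem:diffSACCApp}: two distinct points of a fiber that are connected in $\VmCrit \supseteq \VmSpec$... if they were in $\VmSpec$ and in the same component there they'd be equal, so at least one path between them in $C$ exits $\VmSpec$, meaning it hits $\specF$ --- but $\yy,\yy'$ themselves need the argument). Granting that $\yy \in \specF$ and $\yy' \in \specF$ (which I will derive from Lemma~\ref{lem:diffSACCApp} applied to the restriction, or by a direct fiber-dimension argument using that $\Map^{-1}(\Map(\yy))\cap V$ is finite by \cite[Lemma A.2]{SS2017} only when $\Map(\yy)\notin\FspecF$), property $(iii)$ gives $\yy, \yy' \in \Vcal$, property $(ii)$ converts their connectedness in $C \subseteq \VmCrit$ into connectedness in $\scrG$, and $\yy \neq \yy'$, $\Map(\yy) = \Map(\yy')$ complete the proof. $\qed$
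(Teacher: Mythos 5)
Your $(\Leftarrow)$ direction is correct and matches the paper. The $(\Rightarrow)$ direction, however, is not a proof: it cycles through several abandoned strategies and the step you finally commit to contains a fatal misreading. You propose to show that a cuspidal couple $\yy,\yy'$ lies in $\specF$ and then invoke item $(iii)$ of Definition~\ref{def:cuspgraph} to conclude $\yy,\yy'\in\Vcal$. But $(iii)$ says only that \emph{if} $\bvv\in\Vcal$ \emph{then} $\Map^{-1}(\Map(\bvv))\cap\VR\subseteq\Vcal$; it says nothing about points of $\specF$ belonging to $\Vcal$. Worse, item $(i)$ gives $\Vcal\subseteq\VmSpec$, which is \emph{disjoint} from $\specF$: pushing $\yy,\yy'$ into $\specF$ pushes them \emph{out} of $\Vcal$, not into it. So this route cannot work.

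The key idea you are missing is the transport via Theorem~\ref{thm:thoms}. The paper takes the cuspidal couple $\yy,\yy'$ in $\VmSpec$; by Lemma~\ref{lem:diffSACCApp} they lie in two distinct \SACCs $H,H'$ of $\VmSpec$, both contained in the common component $C$ of $\VmCrit$. Property $(i)$ yields a vertex $\bvv\in\Vcal\cap H$. Then one joins $\yy$ to $\bvv$ by a path $\gamma$ inside $H$; the image $\Map(\gamma([0,1]))$ avoids $\FspecF$ and stays in a single \SACC $T$ of $\RmZFSpec$, so Theorem~\ref{thm:thoms} provides a trivialization $\Psi=(\Map,\Psi_0)$ over $T$. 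One then \emph{defines} $\bvv':=\Psi^{-1}(\Map(\bvv),\Psi_0(\yy'))$. This point satisfies $\Map(\bvv')=\Map(\bvv)$, so $(iii)$ (applied with $\bvv\in\Vcal$, the correct direction) gives $\bvv'\in\Vcal$; the transported path $\gamma'(t)=\Psi^{-1}(\Map(\gamma(t)),\Psi_0(\yy'))$ shows $\bvv'\in H'$, hence $\bvv'\neq\bvv$; and since $H\cup H'\subseteq C$, item $(ii)$ converts connectedness of $\bvv,\bvv'$ in $\VmCrit$ into connectedness in $\scrG$. Without this Thom-transport step you have no way to land on a vertex $\bvv'$ in the fiber of $\Map(\bvv)$, and your plan as written does not close.
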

\begin{proof}
 If such points $\bvv$ and $\bvv'$ exist, they form a cuspidal couple of the restriction of $\Map$
 to $\VR$, so that this map is cuspidal.
 
 Conversely, suppose that the restriction of $\Map$ to $\VR$ is cuspidal so that 
there exist two distinct points $\yy$ and $\yy'$ in $\VmSpec$ having the 
same image through $\Map$ and that belong to the same \SACC $C$ of $\VmCrit$. 
Then, by Lemma~\ref{lem:diffSACCApp}, there exist two distinct \SACCs $H$ and 
$H'$ of $\VmSpec$ such that $\yy\in H$ and $\yy'\in H'$. Remark that both 
$H$ and $H'$ are contained in $C$ since $H$ and $H'$ are two \SAC subsets of 
$\VmCrit$ that have a non-empty intersection with $C$.

By the first item of Definition~\ref{def:cuspgraph}, $\Vcal\cap H$ is not 
empty. 
Then let $\bvv \in \Vcal\cap H$, one has $\bvv \in C$ by the above remark.
Hence, by the second item of Definition~\ref{def:cuspgraph}, one only need to 
prove the existence of $\bvv' \in \Vcal\cap H'$ such that $\Map(\bvv) = \Map(\bvv')$.

Since $H$ is \SAC, there exists a path $\gamma\colon [0,1] \to H$ such that 
$\gamma(0)=\yy$ and $\gamma(1)=\bvv$.
Recalling that $H\subset \VmSpec$, then 
\[
  \Map(\gamma([0,1]) \cap \FspecF = \emptyset.
\]
Let $T$ be the \SACC of $\RmZFSpec$ that contains $\Map(\gamma([0,1]))$. 
According to Theorem~\ref{thm:thoms}, there exists a homeomorphism
\[
 \begin{array}{cccccr}
  \Psi\colon & \Map^{-1}(T)\cap\VR & \to & T &\times &\Map^{-1}(\Map(\yy))\cap\VR
  \\[0.1em]
  & \zz & \mapsto &\big(\Map(\zz)&, & \Psi_0(\zz)\qquad\big)
 \end{array},
\]
such that the image of any \SACC of $\Map^{-1}(T)\cap\VR$, through $\Psi_0$,
is a singleton.
In particular since $\bvv \in H$, then $\Psi(\bvv)=(\Map(\bvv), \Psi_0(\yy))$.

Let $\bvv' = \Psi^{-1}(\Map(\bvv), \Psi_0(\yy'))$. 
By definition, $\Map(\bvv')=\Map(\bvv)$, so 
that by the last item of Definition~\ref{def:cuspgraph}, $\bvv' \in \Vcal$.
Finally, remark that the path
\[
 \begin{array}{cccc}
  \gamma' \colon &[0,1] & \to & \Map^{-1}(T)\cap\VR\\
   &t & \mapsto & \Psi^{-1}(\Map(\gamma(t)), \Psi_0(\yy'))
 \end{array},
\]
is defined for all $t\in[0,1]$ and $\gamma'(0)=\yy' \in H'$. Hence $\bvv' = 
\gamma'(1) \in H'$ since $H'$ is \SAC.

In conclusion, there exist $\bvv$ and $\bvv'$ in $\Vcal$ having the same image 
through $\Map$, such that $\bvv\neq\bvv'$ since $H\cap H'=\emptyset$. Moreover, since $H\cup 
H' \subset C$, then by the second point of Definition~\ref{def:cuspgraph},
$\bvv$ and $\bvv'$ are connected in $\scrG$.
The equivalence is established.
\end{proof}
\begin{figure}[h]
 \includegraphics[width=\linewidth]{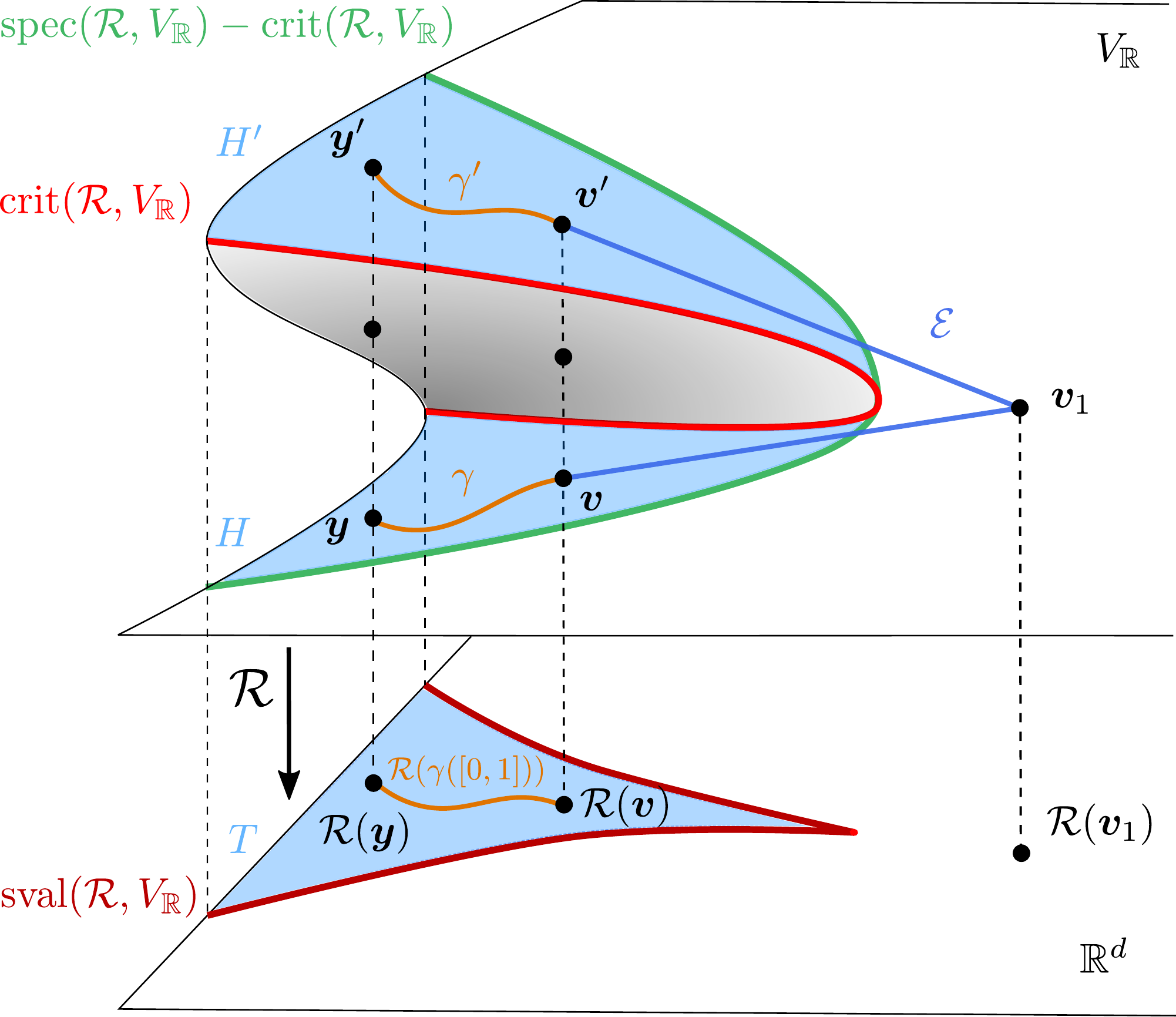}
 \caption{Illustration with $n=3$ and $d=2$ of the proof of Lemma~\ref{lem:cuspchara} where $\Map$ is the projection of the surface $\VR\subset \RR^n$ drawn above the plane $\RR^d$ on the figure. 
 Given a cuspidality graph $\Gcal=\big(\{\bvv,\bvv',\bvv_1\},~\Ecal\big)$ and a cuspidal couple formed by $\yy$ and $\yy'$, one finds, using Theorem~\ref{thm:thoms}, two vertices $\bvv$ and $\bvv'$ that satisfy the statement.}
\end{figure}

Finally, we prove that taking the inverse image of a specific sample set 
of points is enough to satisfy the first item of Definition~\ref{def:cuspgraph}.
\begin{lemma}\label{lem:imagesample}
Let $\Qcal \subset \RR^d$ that intersects every \SACC of $\RmZFSpec$ and let
$\Pcal = \VR \cap \Map^{-1}(\Qcal)$. 
Then $\Pcal$ intersects every \SACC of $\VmSpec$.
\end{lemma}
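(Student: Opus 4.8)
The plan is to fix an arbitrary connected component $H$ of $\VmSpec$ and to produce a point of $\Pcal$ inside $H$; since $H\subset\VR$ and $\Pcal=\VR\cap\Map^{-1}(\Qcal)$, it is enough to find $\yy\in H$ with $\Map(\yy)\in\Qcal$, that is, to show $\Qcal\cap\Map(H)\neq\emptyset$. The first step is the inclusion $\critF\subset\specF$: from $\svalF=\Map(\critF)$ one gets $\critF\subset\Map^{-1}(\svalF)\cap V\subset\Map^{-1}(\atypF)\cap V=\specF$, hence $\VmSpec\subset\VmCrit$. Thus at every point of $\VmSpec$ the map $\Map$ restricts to a submersion between smooth manifolds of the same dimension $d$, hence to a local homeomorphism onto an open subset of $\RR^d$; in particular $\Map$ is an open map on $\VmSpec$, so $\Map(H)$ is a nonempty open connected semi-algebraic subset of $\RR^d$. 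Moreover, as in the proof of Lemma~\ref{lem:diffSACCApp}, $\Map$ sends $\VmSpec$ into $\RmZFSpec$, so $\Map(H)$ is contained in a single connected component $C$ of $\RmZFSpec$.

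The heart of the argument is to improve this to the equality $\Map(H)=C$. Observe first that $\Map^{-1}(C)\cap\VR$ is a clopen subset of $\VmSpec$: it is open because $C$ is open and $\Map$ is continuous, and it is closed in $\VmSpec$ because its complement there is $\Map^{-1}(W)\cap\VR$, where $W$ is the open union of all connected components of $\RmZFSpec$ other than $C$. Consequently the connected component $H$ of $\VmSpec$, being contained in $\Map^{-1}(C)\cap\VR$, is a connected component of $\Map^{-1}(C)\cap\VR$. Now apply Theorem~\ref{thm:thoms} over $C$ (a connected component of $\RmZFSpec$, hence contained in a connected component of $\RmFSpec$, over which the trivialization restricts): fixing $\pp\in C$ there is a homeomorphism $\Psi=(\Map,\Psi_0)\colon\Map^{-1}(C)\cap\VR\to C\times(\Map^{-1}(\pp)\cap\VR)$ that carries every connected component of the source onto a slice $C\times\{\mathrm{pt}\}$, the fibre $\Map^{-1}(\pp)\cap\VR$ being finite under assumption~\ref{ass:A}. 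Applied to $H$ this gives $\Psi(H)=C\times\{\mathrm{pt}\}$, and projecting onto the first factor yields $\Map(H)=C$.

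Finally, since $\Qcal$ meets every connected component of $\RmZFSpec$, pick $\qq\in\Qcal\cap C$; because $\Map(H)=C$ there is $\yy\in H$ with $\Map(\yy)=\qq\in\Qcal$, so $\yy\in H\cap\Pcal$. As $H$ was an arbitrary connected component of $\VmSpec$, the lemma follows. I expect the main obstacle to be exactly the surjectivity $\Map(H)=C$, i.e.\ showing that a connected component of $\VmSpec$ covers an \emph{entire} connected component of $\RmZFSpec$ rather than only an open subset of it; this is where properness of $\Map$ over $\RmZFSpec$ is essential, as it is what makes $\Map^{-1}(C)\cap\VR\to C$ a trivial fibration in Theorem~\ref{thm:thoms}. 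An alternative, more hands-on derivation of the same equality uses properness directly: the restriction of $\Map$ to $\VmSpec$ is then a closed map onto $\RmZFSpec$, so $\Map(H)$ is simultaneously open and closed in $\RmZFSpec$ and connected, hence a connected component.
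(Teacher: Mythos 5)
Your proof is correct and relies on the same key ingredient as the paper's, namely Theorem~\ref{thm:thoms} applied over a \SACC of $\RmZFSpec$: where the paper fixes $\yy\in H$, lifts a concrete path through the trivialization from $\Map(\yy)$ to a point of $\Qcal$, and keeps the endpoint of the lift in $H$ by connectedness, you extract the cleaner intermediate equality $\Map(H)=C$ directly from the trivializing homeomorphism (after the clopen observation identifying $H$ as a \SACC of $\Map^{-1}(C)\cap\VR$). These are two presentations of one and the same argument; note only that your opening paragraph establishing openness of $\Map$ on $\VmSpec$ via $\critF\subset\specF$ is dispensable, since the clopen decomposition of $\VmSpec$ by the sets $\Map^{-1}(C_i)\cap\VR$ already forces $H\subset\Map^{-1}(C)\cap\VR$ without invoking that $\Map$ is an open map.
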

\begin{proof}
Let $H$ be a \SACC of $\VmSpec$ we need to prove that $H \cap \Pcal$ is not 
empty. Let $\yy \in H$, and let $T$ be the \SACC of $\RmZFSpec$ that contains 
$\Map(\yy)$.
By assumption, there exists $\pp \in \Pcal \cap T$.
Let $\sigma\colon[0,1]\to H$ be a path such that $\sigma(0)=\Map(\yy)$ and 
$\sigma(1)=\pp$. Since $\sigma$ lie in $\RmZFSpec$, the path $\sigma([0,1])$ 
is still contained in $T$.
Then according to Theorem~\ref{thm:thoms}, there exists a homeomorphism
\[
 \begin{array}{cccccr}
  \Psi\colon & \Map^{-1}(T)\cap\VR & \to & T &\times &\Map^{-1}(\Map(\yy))\cap\VR
  \\[0.1em]
  & \zz & \mapsto &\big(\Map(\zz)&, & \Psi_0(\zz)\qquad\big)
 \end{array},
\]
such that the image of any \SACC of $\Map^{-1}(T)\cap\VR$, through $\Psi_0$,
is a singleton.

Let $\gamma \colon t\in [0,1] \mapsto \Psi^{-1}(\sigma(t),\Psi_0(\yy))$, it 
satisfies $\gamma(0)=\yy \in H$.
Since $H$ is \SAC, then $\bvv=\gamma(1)$ belongs to $H$. 
Moreover, since $\sigma(1)=\pp$, then by uniqueness $\Map(\bvv)=\pp$ so that $\bvv \in \Pcal$ and $H\cap \Pcal$ is not empty as claimed.
\end{proof}

We can now proceed to prove the correction of Algorithm~\ref{alg:cuspalgo}.

\begin{proof}[Proof of Proposition~\ref{prop:corralgo}]
Let $\bgg,\Qcal,\scrP, \Delta, \scrR$ and $\scrG=(\Vcal, \Ecal)$ be the data 
obtained in the execution of Algorithm~\ref{alg:cuspalgo}. 
Let us prove that we can derive from $\scrG$ a graph $\scrGt$ that is a
cuspidal graph of the restriction of $\Map$ to $\VR$. 
Then, using this fact and Lemma~\ref{lem:cuspchara},
we prove that the tests on $\scrG$ that are operated in
Algorithm~\ref{alg:cuspalgo}, are enough to conclude on the cuspidality
of the restriction of $\Map$ to $\VR$.
Remark that according to the description of the subroutines
\textsc{AtypicalValues} and \textsc{Crit}, the following holds
\begin{align*}
    &\ZFspecF = \V(\bgg), \quad V\cap\Map^{-1}(\Qcal) =\bigcup_{\qq\in\Qcal}\V(\ff,\Map-\qq)\\
    &\quad\qquad\et \critF = \V(\ff,\Delta).
\end{align*}

Then, according to the first item of Proposition~\ref{prop:isotop} there exists an isotopy 
$\Hcal$ of $\RR^n$ such that $\Hcal(\scrC_{\scrG},1)=Z(\scrR)\cap\RR^n-\critF$ 
where $\scrC_{\scrG}$ is the piecewise linear curve of $\RR^n$
associated to $\scrG$. We denote further $\yy \mapsto \Hcal(\yy,1)$ by $\Hcal_1$.
Let $\Vcalt = \Hcal_1(\Vcal)$ and
\[
    \Ecalt = \left\{ \left\{\Hcal_1(\bvv), \Hcal_1(\bvv')\right\}
    \,\middle|\, \{\bvv,\bvv'\} \in \Ecal \right\}.
\]
Let $\scrGt = (\Vcalt,\Ecalt)$ be the graph thus defined.
According to the second item of Proposition~\ref{prop:isotop} the equality
$\Vcalt=Z(\scrP)\cap\RR^n$ holds since 
\[
    Z(\scrP) \subset \Map^{-1}(\Qcal) \et \Qcal\cap\ZFspecF = \emptyset.
\]
Moreover the following map is a bijection
\[
 \begin{array}{cccccr}
  \Hcal_1\times\Hcal_1\colon&\Ecal & \to & \Ecalt \\[0.1em]
  &\{\bvv,\bvv'\} & \mapsto & \{\Hcal_1(\bvv), \Hcal_1(\bvv')\}
 \end{array}.
\]
Let us show that $\scrGt$ is a cuspidality graph of the restriction of $\Map$ to $\VR$.

By Corollary~\ref{cor:samplerational}, the finite set $\Qcal \subset\RR^d$ intersects
every \SACC of $\RmZFSpec$. Then by Lemma~\ref{lem:imagesample}, every \SACC of
$\VmCrit$ has a non-empty intersection with $\VR\cap\Map^{-1}(\Qcal)$.
As $\Qcal$ is finite and does not intersect $\svalF$, the set $\VR\cap\Map^{-1}(\Qcal)$ 
is a finite union of the sets $\VR\cap\Map^{-1}(\qq)$, which are finite by \cite[Lemma A.2]{SS2017}. 
Hence $\VR\cap\Map^{-1}(\Qcal)$ is finite so that its \SACCs are reduced to its points.
Hence by Proposition~\ref{prop:sample}, $\VR\cap\Map^{-1}(\Qcal)$ is equal to 
$Z(\scrP)\cap\RR^n$ which is itself equal to $\Vcalt$.
Therefore, $\scrGt$ satisfies the first item of Definition~\ref{def:cuspgraph}.

Let $\bvv,\bvv' \in \Vcalt$. According to Proposition~\ref{prop:roadmap}, since
$\bvv$ and $\bvv'$ are in  $Z(\scrP)\cap\RR^n$, they are connected in
$\VmCrit$ if and only if they are connected in 
\[
Z(\scrR)\cap\RR^n-\critF.
\]
However by Proposition~\ref{prop:isotop}, since $Z(\scrP) \subset Z(\scrR)$, then
$\bvv$ and $\bvv'$ are connected in $Z(\scrR)\cap\RR^n-\critF$ if and only if 
$\Hcal_1^{-1}(\bvv)$ and $\Hcal_1^{-1}(\bvv')$ are connected in $\scrG$.
But the latter statement is equivalent to saying that
$\bvv$ and $\bvv'$ are connected in $\scrGt$ since $\Hcal_1\times\Hcal_1$ is a bijection.
Therefore, $\scrGt$ satisfies the second item of Definition~\ref{def:cuspgraph}.

Finally $\scrGt$ satisfies the last item of Definition~\ref{def:cuspgraph} since
for all $\bvv \in \Vcalt$,
\[
    \VR \cap \Map^{-1}(\Map(\bvv)) \:\subset\: \VR \cap \Map^{-1}(\Qcal) = Z(\scrP) \cap \RR^n = \Vcalt.
\]
In conclusion, $\scrGt$ is a cuspidal graph of the restriction of $\Map$ to $\VR$.
Let us prove now that, the restriction of $\Map$ to $\VR$ is cuspidal if and only if, on inputs
$\ff$ and $\Map$, Algorithm~\ref{alg:cuspalgo} outputs \texttt{True}.

If Algorithm~\ref{alg:cuspalgo} outputs \texttt{True}, there exists
$\qq \in \Qcal$ and $\bvv_1,\bvv_2 \in \Vcal_{\qq}$ that are connected in $\scrG$. 
Let $\bvv=\Hcal_1(\bvv_1)$ and $\bvv'=\Hcal_1(\bvv_2)$, then by definition of $\Vcalt$, 
$\bvv$ and $\bvv'$ are in $\Vcalt$.
According to Proposition~\ref{prop:isotop} and the definition of the procedure $\textsc{Vert}_{\scrG}$,
since $\bvv_1,\bvv_2\in \Vcal_{\qq}$, then $\Map(\bvv)=\Map(\bvv')=\qq$.
Besides, by definition of $\Ecalt$, $\bvv$ and $\bvv'$ are connected in $\scrGt$ so that by 
Lemma~\ref{lem:cuspchara}, the restriction of $\Map$ to $\VR$ is cuspidal.

Conversely, suppose that the restriction of $\Map$ to $\VR$ is cuspidal.
Then by Lemma~\ref{lem:cuspchara} there exist two distinct points $\bvv,\bvv'\in\Vcalt$,
connected in $\scrGt$, such that $\Map(\bvv)=\Map(\bvv')$.
Since $\Map(\Vcalt) \subset \Qcal$, there exists $\qq \in \Qcal$ such that 
$\qq=\Map(\bvv)=\Map(\bvv')$.
For such a point $\qq$ let $\scrP_{\qq}$ and $\Vcal_{\qq}$ computed in Algorithm~\ref{alg:cuspalgo}
at respectively step~\ref{step:samplerat} and step~\ref{step:vert}.
Recall that $\scrP_{\qq}$ is the zero-dimensional parametrization encoding $\VR\cap\Map^{-1}(\qq)$
and $\Vcal_{\qq}$ the subset of vertices of $\Vcal$, that are associated to the points
of $\VR\cap\Map^{-1}(\qq)$ through $\Hcal_1$.
Hence according to Proposition~\ref{prop:isotop} and the description of 
$\textsc{Vert}_{\scrG}$, $\Hcal_1^{-1}(\bvv)$ and $\Hcal_1^{-1}(\bvv')$ are
distinct and belong to $\Vcal_{\qq}$.
Since $\bvv$ and $\bvv'$ are connected in $\scrGt$, then so are 
\[
    \Hcal_1^{-1}(\bvv) \et \Hcal_1^{-1}(\bvv')
\]
in $\scrG$. Hence \call{GraphConnected}{(\Hcal_1^{-1}(\bvv),\Hcal_1^{-1}(\bvv')),\Gcal} will
outputs \texttt{True} so that Algorithm~\ref{alg:cuspalgo} outputs \texttt{True}.
\end{proof}

\section{Complexity analysis}\label{sec:complexity}

This section is devoted to the proof of the following
proposition. Together with Proposition~\ref{prop:corralgo}, it
establishes Theorem~\ref{thm:mainresult}.

\begin{proposition}\label{prop:complexity}
Let \(\ff = (f_1, \ldots, f_s)\) and $\Map = (\map_1, \ldots, \map_d)$ be two sequences
of polynomials in \(\QQ[x_1, \ldots, x_n]\) and $\degtot$ be the maximum degree of these polynomials.
Let $\tau$ be a bound on the bit size of the coefficients of the input polynomials.
Then, under assumption \ref{ass:A}, with inputs $\ff$ and $\Map$, the execution of
Algorithm~\ref{alg:cuspalgo} terminates using at most
 \[
    \tau^\star ((s+d)\degtot)^{O(n^2)}
 \]
 bit operations. 
\end{proposition}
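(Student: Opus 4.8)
The plan is to bound the cost of Algorithm~\ref{alg:cuspalgo} line by line, tracking both the number of bit operations and the degrees and bit sizes of the intermediate objects (zero-dimensional and one-dimensional parametrizations), since the latter feed into the cost of subsequent calls. First I would handle the algebraic elimination steps \textsc{AtypicalValues} and \textsc{Crit}: both amount to computing, from $\ff$ and $\Map$, polynomial representations of $\critF = \V(\ff, \Delta)$ and of a proper algebraic set containing $\ZFspecF = \V(\bgg)$. Using \cite[Lemma A.2]{SS2017} for $\crit$, the sequence $\Delta$ consists of the $f_i$'s together with the $n\times n$ minors of $\jac[\ff, \Map]$, so there are at most $\binom{s+d}{n}\le (s+d)^{O(n)}$ such polynomials, each of degree at most $n(\degtot-1)+\degtot \le (n+1)\degtot$ and bit size $\tau^\star \degtot^{O(n)}$; the cost is dominated by the determinant expansions and is $\tau^\star ((s+d)\degtot)^{O(n)}$. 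The set $\ZFspecF = \ZFspecF$ requires the properness/non-properness locus of \cite{Je1999} together with $\svalF = \Map(\critF)$, which is obtained by a projection elimination (Gröbner basis or geometric resolution) on $\V(\ff,\Delta)$ in $n+d$ variables; the output $\bgg$ has $d$ variables, degree $((s+d)\degtot)^{O(n)}$, bit size $\tau^\star((s+d)\degtot)^{O(n)}$, and the cost stays within $\tau^\star((s+d)\degtot)^{O(n)}$.

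Next I would account for the sampling steps. Step~\ref{step:samplerat} calls \textsc{SamplePointsRational} on $\bgg$ (which has $t = ((s+d)\degtot)^{O(n)}$ entries in $d\le n$ variables of degree $((s+d)\degtot)^{O(n)}$ in $\QQ[x_1,\dots,x_d]$): by Corollary~\ref{cor:samplerational} this yields $\Qcal\subset\QQ^d$ of cardinality $((s+d)\degtot)^{O(n^2)}$ in $\tau^\star ((s+d)\degtot)^{O(n^2)}$ bit operations, with coordinates of bit size $\tau^\star((s+d)\degtot)^{O(n)}$. The loop at lines~\ref{step:forunion}--\ref{step:endforunion} runs $|\Qcal| = ((s+d)\degtot)^{O(n^2)}$ times; each iteration calls \textsc{SamplePoints} on the system $(\ff, \Map-\qq)$, which has $s+d$ equations of degree $\le\degtot$ in $n$ variables over $\QQ$ with bit size $\tau^\star((s+d)\degtot)^{O(n)}$, costing (Proposition~\ref{prop:sample}) $\tau^\star((s+d)\degtot)^{O(n)}$ bit operations and producing $\scrP_\qq$ of degree $\degtot^{O(n)}$. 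The \textsc{Union} calls accumulate: $\scrP$ has degree $\delta_\scrP = \sum_\qq \deg\scrP_\qq = |\Qcal|\cdot\degtot^{O(n)} = ((s+d)\degtot)^{O(n^2)}$ and the total cost of the loop is $|\Qcal|$ times the per-iteration cost, still $\tau^\star((s+d)\degtot)^{O(n^2)}$.

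Then I would bound the connectivity machinery, which is where the $n^2$ in the exponent genuinely appears. Step~\ref{step:rm} calls \textsc{Roadmap} on $\ff$, $\pm\Delta$ and $\scrP$: here the defining polynomials have degree $D = ((s+d)\degtot)^{O(n)}$ (coming from $\Delta$), $t = ((s+d)\degtot)^{O(n)}$ of them, $\delta = \delta_\scrP = ((s+d)\degtot)^{O(n^2)}$, and bit size $\tau^\star((s+d)\degtot)^{O(n)}$; Proposition~\ref{prop:roadmap} gives cost $\tau^\star t^{O(n)}\delta D^{O(n^2)} = \tau^\star ((s+d)\degtot)^{O(n^2)}$ and output degree $\delta_\scrR = \mathrm{poly}(t^{n+1}\delta D^{n^2}) = ((s+d)\degtot)^{O(n^2)}$. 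Step~\ref{step:isotop} calls \textsc{GraphIsotop} on $\scrR$, $\pm\Delta$, $\scrP$ with the parameter $D$ there being $\max(\delta_\scrR, \deg\Delta) = ((s+d)\degtot)^{O(n^2)}$, so by Proposition~\ref{prop:isotop} its cost is $\tau^\star (\delta_\scrP D)^{O(1)} = \tau^\star((s+d)\degtot)^{O(n^2)}$ and the graph $\scrG$ has $((s+d)\degtot)^{O(n^2)}$ vertices. Finally the double loop at lines~\ref{step:forconnect}--\ref{step:endforconnect} iterates over $\qq\in\Qcal$ and pairs $(\bvv_1,\bvv_2)\in\Vcal_\qq^2$: each \textsc{Vert}$_\scrG$ call costs $\tau^\star(\delta_\scrP D)^{O(1)} = \tau^\star((s+d)\degtot)^{O(n^2)}$, and each \textsc{GraphConnected} call is polynomial in the size of $\scrG$, i.e. $((s+d)\degtot)^{O(n^2)}$; since $|\Qcal|$ and $|\Vcal|$ are both $((s+d)\degtot)^{O(n^2)}$, the whole loop costs $\tau^\star((s+d)\degtot)^{O(n^2)}$. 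Summing all contributions, every line is bounded by $\tau^\star((s+d)\degtot)^{O(n^2)}$, which proves the proposition.

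The main obstacle I anticipate is not any single step but the bookkeeping of how degrees and bit sizes compound through the pipeline: the roadmap is run not on the input polynomials but on $\Delta$, whose degree is already $\degtot^{O(n)}$, and with query points $\scrP$ whose degree $\delta_\scrP$ is the $|\Qcal|$-fold sum of the degrees of the fiber parametrizations, hence itself $((s+d)\degtot)^{O(n^2)}$; one must check that plugging $D \leftarrow ((s+d)\degtot)^{O(n)}$, $\delta \leftarrow ((s+d)\degtot)^{O(n^2)}$ into the bound $\tau^\star t^{O(n)}\delta D^{O(n^2)}$ of Proposition~\ref{prop:roadmap} still collapses to $\tau^\star((s+d)\degtot)^{O(n^2)}$ rather than, say, $((s+d)\degtot)^{O(n^3)}$ — which it does precisely because $D$ enters the roadmap bound with exponent $O(n^2)$ and $D$ is only singly-exponential in $n$, while $\delta$ enters linearly. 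The rest is a careful but routine aggregation, using that the number of loop iterations is itself absorbed into the $((s+d)\degtot)^{O(n^2)}$ factor.
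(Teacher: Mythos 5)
Your overall strategy — line-by-line bookkeeping of degrees and bit sizes through the pipeline and then aggregating — is exactly the paper's, and most of your per-step bounds are compatible with the paper's (your constants are looser but still within $((s+d)\degtot)^{O(n^2)}$). However, there is a genuine error at the most delicate point, the \textsc{Roadmap} step, and it is not a cosmetic one: your stated intermediate bounds, plugged into Proposition~\ref{prop:roadmap}, would give $((s+d)\degtot)^{O(n^3)}$, not the claimed $((s+d)\degtot)^{O(n^2)}$.

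Concretely, you first (correctly) compute in the \textsc{Crit} analysis that the $n\times n$ minors of $\jac[\ff,\Map]$ have degree at most $n(\degtot-1)+\degtot\le(n+1)\degtot$, which is \emph{polynomial} in $n$ and $\degtot$. But when you invoke \textsc{Roadmap} you instead write $D=((s+d)\degtot)^{O(n)}$ ``coming from $\Delta$'' — this appears to conflate the \emph{number} of minors $\binom{s+d}{n}=(s+d)^{O(n)}$ with their \emph{degree}. With $D=((s+d)\degtot)^{O(n)}$ one gets $D^{O(n^2)}=((s+d)\degtot)^{O(n^3)}$, which breaks the theorem. Your final ``obstacle'' paragraph then compounds the error by asserting that the bound still collapses to $O(n^2)$ ``because $D$ is only singly-exponential in $n$''; this is the opposite of the truth. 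The reason $D^{O(n^2)}$ is harmless is precisely that $D=O(n\degtot)$ is \emph{polynomially} bounded, so $D^{O(n^2)}=(n\degtot)^{O(n^2)}\le((s+d)\degtot)^{O(n^2)}$ (using $s+d\ge n$, which follows from assumption~\ref{ass:A} since an equidimensional $V$ of dimension $d$ needs at least $n-d$ generators). It is $\delta_\scrP$, not $D$, that is allowed to be singly exponential, because it enters the roadmap cost linearly. You do say ``$\delta$ enters linearly'', which is the right observation, but you attach the wrong conclusion to $D$.

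Secondarily, your treatment of \textsc{AtypicalValues} is underspecified: ``Gröbner basis or geometric resolution'' gives the critical values, but it does not produce the non-properness locus of \cite{Je1999}, and you don't say how that set is actually computed. The paper handles both pieces with quantifier elimination via \cite[Chap.~14]{BPR2006}, encoding non-properness as a two-quantifier-block formula in $n+d+1$ auxiliary variables, which is what delivers the stated $\tau(s\degtot)^{O(nd)}$ cost and $\degtot^{O(n)}$ output degrees. This gap is fixable without changing the final exponent, but as written your proposal does not give a route to computing $\bgg$.
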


\begin{proof}

Fix \(\ff\) and $\Map$, we note $V=\V(\ff)$ and $\VR=V\cap\RR^n$.
Assume that assumption \ref{ass:A} holds that is that
$V$ is equidimensional of dimension $d$.
Let $\degV$ and $\degMap$ be the maximum degree of the polynomials in respectively
$\ff$ and $\Map$ so that $\degtot = \max\{\degV,\degMap\}$, and let $\tau$ be a bound on 
the bitsize of the input coefficients. We proceed by considering each step
of Algorithm~\ref{alg:cuspalgo}.

\paragraph*{Step~\ref{step:atyp}} 
The first step of the algorithm consists in computing polynomials whose complex zero-set is the Zariski
closure of the set of atypical values. According to \cite[Theorem 4.1.]{JK2005}, the set $\FspecF$ is contained in an hypersurface of $\CC^d$ degree bounded by
\[
    \degV^{n-d} \left(n\degV + d(\degMap-\degV) \right)^{d}.
\]
Then the polynomials in the finite sequence $\bgg$, given by the call
to \textsc{AtypicalValues}, have degree bounded by $n^d\degtot^n$. To
compute a polynomial defining them, we rely on the quantifier
elimination algorithm in \cite[Chap. 14]{BPR2006}. Precisely, the set
of non-properness can be defined naturally by a quantified formula
expressing that $y$ is in the set of non-properness if and only if for
any $r>0$
there exists $\epsilon>0$ such that for any $y'\in \RR^d$ and $x'\in
\Map^{-1}(\yy')\cap \VR$, $\|y-y'\|^2< \epsilon$ implies that $\|x'\|>r$. There is one
alternate of quantifiers with blocks of quantified variables of
lengths $1, n+d+1$.
Solving such a quantifier elimination problem is
done using $\tau(s\degtot)^{O((n+d)d)}\subset \tau(s\degtot)^{O(nd)}$
bit operations by \cite[Theorem 14.22]{BPR2006} and it outputs
$(s\degtot)^{O(nd)}$ polynomials of degree in $\degtot^{O(n)}$.
Computing a polynomial encoding the critical values is done still
using quantifier elimination but in an even simpler way: these are the
projections of the values of $\Map$ taken at the system $f_1, \ldots,
f_s$ and the $n-d+1$ minors of the Jacobian matrix associated to $\ff,
\Map$.

\paragraph*{Step~\ref{step:samplerat}}
Since $\ZFspecF=\V(\bgg)$, then by Corollary~\ref{cor:samplerational},
the call to \textsc{SamplePointsRational} outputs a set $\Qcal$
of cardinality $\NQ$ bounded by $n^{O(d^2)}\degtot^{O(nd)}$, using at most 
\[
    \tau n^{O(d^2)}\degtot^{O(nd)}
\]
bit operations. 
We denote further $\Qcal=\{\qq^1,\dotsc,\qq^{\NQ}\}$.

\paragraph*{Steps~\ref{step:forunion}-\ref{step:endforunion}}
Suppose that in the \textbf{for} loop, we consider successively $\qq^1$ to $\qq^{\NQ}$.
Let $0\leq i\leq \NQ$, and let $\delta_{\scrP,i}$ be the degree of $\scrP$ at the
end of the $i$-th iteration.
By Proposition~\ref{prop:sample}, for every $1\leq i\leq \NQ$,
at step~\ref{step:sampleq}, \call{SamplePoints}{(\ff,\Map-\qq_i),0} returns a
zero-dimensional parametrization of degree bounded by $\degtot^{O(n)}$.
Then, we have
\[
    \delta_{\scrP}^i \leq\delta_{\scrP}^{i-1} + \degtot^{O(n)}.
\]
Since $\delta_{\scrP,0} = 0$ then $\delta_{\scrP,\NQ}$ 
is bounded by $n^{O(d^2)}\degtot^{O(nd)}$ since $\NQ$ 
is bounded by $n^{O(d^2)}\degtot^{O(nd)}$.
Since the input has constant size, each call of \textsc{SamplePoints}, at step~\ref{step:sampleq}, costs at most  
$\tau D^{O(n)}$ bit operations.
Besides, since the $\delta_{\scrP,i}$'s are in increasing order, according to \cite[Lemma J.4.]{SS2017}, each call to \textsc{Union}, at 
step~\ref{step:union}, is polynomial in $\delta_{\scrP,\NQ}$.

Therefore, at step~\ref{step:endforunion}, $\scrP$ has degree $\delta_{\scrP}$
bounded by $n^{O(d^2)}\degtot^{O(nd)}$
and the total loop execution is using at most $\tau n^{O(d^2)}\degtot^{O(nd)}$ bit operations.

\paragraph*{Step~\ref{step:crit}}
Next, \call{Crit}{\Map,\ff} returns a sequence of polynomials $\Delta$ by computing
the determinant of all the $n\times n$ submatrices $\jac[\ff,\Map]$ 
according to \cite[Lemma A.2.]{SS2017}.
One sees that there are $\binom{s+d}{n}$ such minors, which have degrees bounded by $n(\degtot-1)$.

\paragraph{Step~\ref{step:rm}}
According to the previous step, and by Proposition~\ref{prop:roadmap},
\textsc{Roadmap}$(\ff,\pm\Delta,\scrP)$ returns a one-dimensional rational parametrization $\scrR$
using at most 
\[
 \tau^\star \binom{s+d}{n}^{O(n)} n^{O(d^2)}\degtot^{O(nd)} (n\degtot)^{O(n^2)}
\]
bit operations which is then bounded by $\tau^\star ((s+d)\degtot)^{O(n^2)}$.
Moreover the degree of $\scrR$ is bounded by 
$((s+d)\degtot)^{O(n^2)}$.

\paragraph*{Step~\ref{step:isotop}}
According to the previous step, and by Proposition~\ref{prop:isotop}, the call to
\textsc{GraphIsotop}, with input $(\ff,\pm\Delta, \scrP)$, costs at most
\[
   \tau^\star ((s+d)\degtot)^{O(n^2)}
\]
bit operations.

\paragraph*{Steps~\ref{step:forconnect}-\ref{step:endforconnect}}
At each iteration, the call to $\textsc{Vert}_{\scrG}$ at step~\ref{step:vert}
requires a number of operations which is polynomial in $\delta_{\scrP}$.
Besides the procedure \textsc{GraphConnected}, who has bit complexity linear in $\delta_{\scrP}$
is called at most $\NQ$ times in the \textbf{for} loop of 
steps~\ref{step:ifVq}-\ref{step:endifVq}.
Hence, the \textbf{for} loop of steps~\ref{step:forconnect}-\ref{step:endforconnect} requires at most $n^{O(d^2)}\degtot^{O(nd)}$
bit operations.

In conclusion the whole execution of Algorithm~\ref{alg:cuspalgo} uses at most
$\tau^\star ((s+d)\degtot)^{O(n^2)}$ bit operations, which proves the proposition.
\end{proof}

\section{An example: Orthogonal 3R serial robot}
\label{sec:example}
The cuspidal behaviour of 3R serial robots has been analyzed extensively in the past \cite{el1995recognize, wenger_changing_1996}. 
In this section, we present an example of an orthogonal 3R serial robot in order to put forth the application of the algorithm. 
Such a robot is modeled as a map that maps the joint angles of the robot to the position of the end-effector. The joint angles belong to the so-called the joint space, while 
the set of the positions of the end-effector is called the workspace. The robot illustrated in this section is similar to the one discussed in \cite{el1995recognize} and is known to be cuspidal. 
Such a robot is defined by its D-H parameters (see \cite{dh_para_ref,wenger_generic_2022}), which are here, following the 
conventions, $d = [0, 1, 0]$, $a = [1, 2, 3/2]$ and
$\alpha = [\pi/2, -\pi/2, 0]$.

\begin{figure}[H]
\centering
\includegraphics[width = 0.23\textwidth]{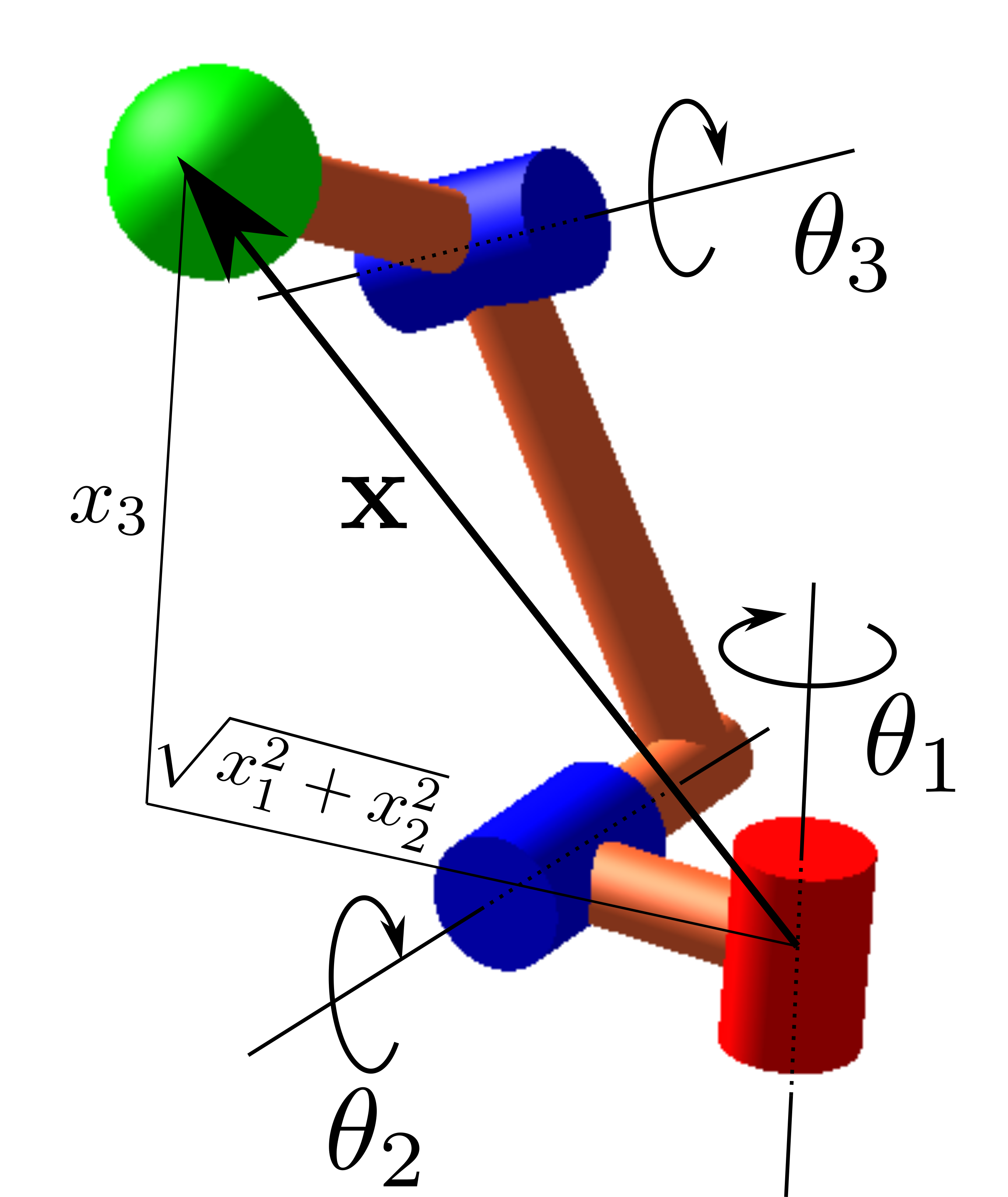}
\caption{An example of orthogonal 3R serial robot}
\label{fig:3R_3d}
\end{figure}
From \cite{wenger_generic_2022}, the robot can be associated to this kinematic map,
\[
\begin{array}{cccc}
    \mathcal{K} \colon& \RR^3 &\longrightarrow & \RR^3 \\
    &\btheta = (\theta_1,\theta_2,\theta_3) &\longmapsto 
    &\big(\maptrig_1(\btheta), \maptrig_2(\btheta),\maptrig_3(\btheta))\big)
\end{array}
\]
where for all $(\theta_1,\theta_2,\theta_3)\in\RR^3$,
\begin{align*}
&\maptrig_1(\theta_1,\theta_2,\theta_3) = \frac{1}{2}c_1c_2(3c_3 + 4) - \frac{1}{2}s_1(3s_3 + 2) + c_1\\
&\maptrig_2(\theta_1,\theta_2,\theta_3) = \frac{1}{2}s_1c_2(3c_3 + 4) + \frac{1}{2}c_1(3s_3 + 2) + s_1\\
&\maptrig_3(\theta_1,\theta_2,\theta_3) = -\frac{1}{2}s_2(3c_3 + 4)
\end{align*}
and for $i\in\{1,2,3\}$, $c_i = \cos(\theta_i)$ and $s_i = \sin(\theta_i)$. 
The singular postures of the robot are the points $(\theta_1,\theta_2,\theta_3) \in\RR^3$
where the determinant of the Jacobian matrix $\jac \mathcal{K}$, of $\mathcal{K}$, vanishes.
Let $\ff = (f_1,f_2,f_3)$ and $\Map = (\map_1,\map_2,\map_3)$  be sequences of polynomials in $\QQ[c_1,s_1,c_2,s_2,c_3,s_3]$ where for all $i\in\{1,2,3\}$ 
\[
 f_i = c_i^2 + s_i^2 -1 \et \map_i = \maptrig_i(\theta_1,\theta_2,\theta_3).
\]
Then, the points $(\theta_1,\theta_2,\theta_3) \in \RR^3$ annihilating $\det(\jac \mathcal{K})$
are exactly the points of $\RR^3$ such that $(c_1,s_1,c_2,s_2,c_3,s_3)\in \VR$ and the matrix 
$\jac[\ff,\Map]$ has not full rank.
Since $\ff$ satisfies assumption \ref{ass:A}, the latter points
are exactly the points of $\crit(\Map,\V(\ff))\cap \RR^n$.

Therefore, the robot can be also modeled as the restriction of the polynomial map associated to $\Map$ to the real algebraic set $\VR = V \cap \RR^n$, where $V=\V(\ff)$, and
deciding the cuspidality of this map amounts to decide the cuspidality of the robot.
Since assumption \ref{ass:A} is satisfied, we can apply Algorithm~\ref{alg:cuspalgo} to $\ff$ and $\Map$ and make this decision.

The set $\critF$ is defined by the vanishing of the following polynomial
\[
 \Delta = -6(3c_3 + 4)(c_2c_3 - 2c_2s_3 - s_3).
\]
Remark that this polynomial does not depend on $c_1$ nor $s_1$.
Since $V$ is bounded by design, the restriction of $\Map$ to $V$ is proper so that $\atypF = \Map(\critF)$. 
Hence the polynomial $g$ whose zero-set is $\zclos{\atyp(\Map)}$ does not depend on $c_1$ nor $s_1$ as well. 
The computation of this polynomial can be done by algebraic elimination and can be found in \cite{el1995recognize}.

The application of Algorithm~\ref{alg:cuspalgo} gives a rise to two main 
sets.
First, the computation of a sample set of points that meets every \SACC of
$\RR^3-\zclos{\atypF}$, is done trough the \textsc{WitnessPoints} function , which is available in Maple 2020. 
The output set $\Pcal$ is represented in Figure~\ref{fig:witness_points} where we adopted a two dimensional representation.
Since $\rho = \sqrt{x_1^2+x_2^2}$ and $x_3$ do not depend on $c_1$ nor $s_1$,
as well as the polynomial $g$ defining $\zclos{\atypF}$, it makes sense to look at the projection
of $\zclos{\atyp(\Map)}$ and $\Pcal$ on the plane associated to $(\rho, x_3)$.

\begin{figure}[H]
\centering
\includegraphics[width = 0.27\textwidth]{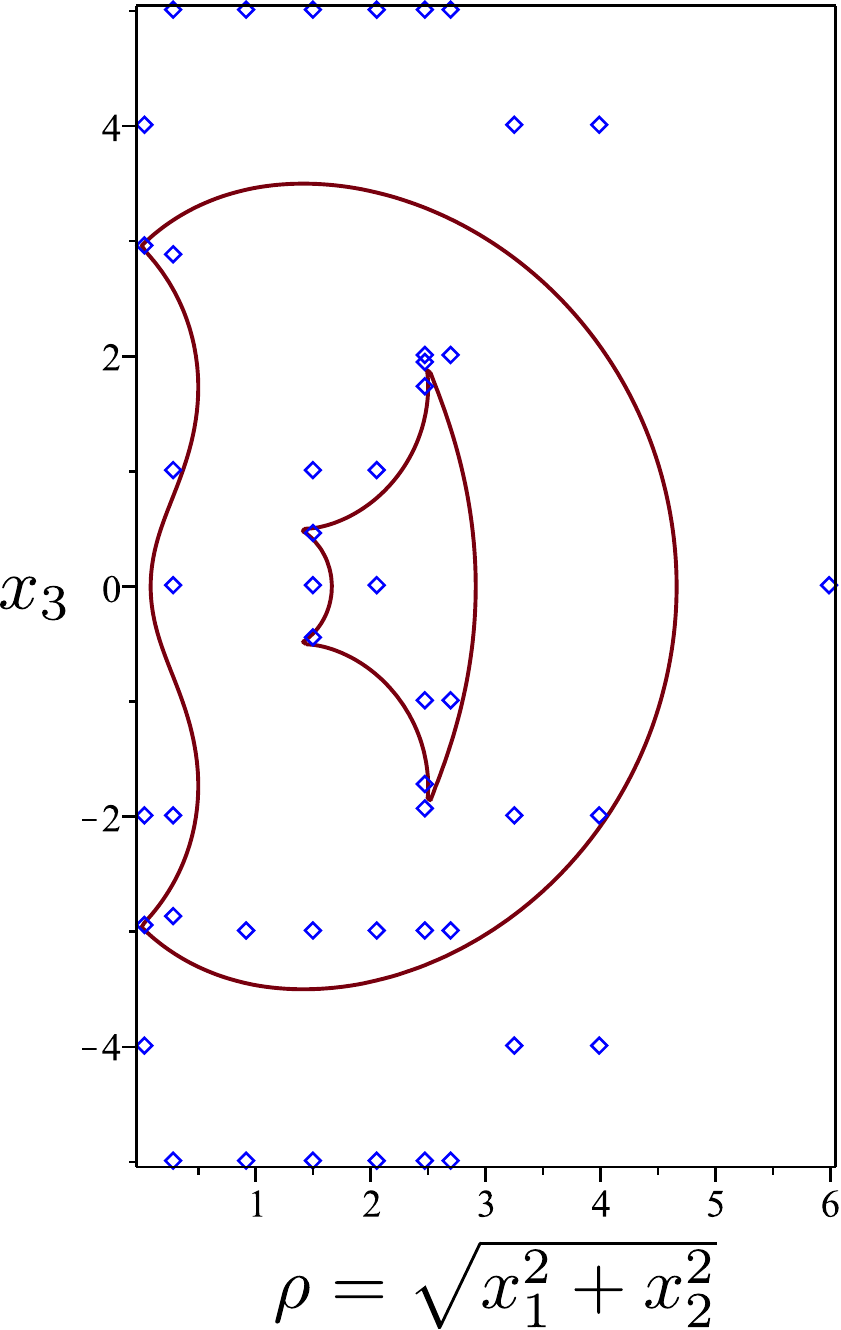}
\caption{Projection on the plane $(\rho, x_3)$ of the set atypical values (red curve) of
an orthogonal 3R serial robot and the points (blue diamonds) of the sample set $\Pcal$
that meets every connected component of the complementary of $\zclos{\atyp(\Map)}$.}
\label{fig:witness_points}
\end{figure}

Then, taking the inverse solutions of these points through $\Map$, we compute a roadmap
of $\VmCrit$ passing through these points. Hence one can easily identify points that belong
to the same \SACC of $\VmCrit$.
Hereafter we describe briefly how do we compute this roadmap. 
The first step consists in deforming the semi-algebraic set $S=\VR-\V(\Delta)$
into the closed semi-algebraic set that is the union of 
\begin{align*}
S^+=\VR\cap\{\xx \in \RR^6 \mid \Delta \geq \epsilon\}\\
\et S^-=\VR\cap\{\xx \in \RR^6 \mid \Delta \leq -\epsilon\}
\end{align*}
with $\epsilon$ small enough.
Since $\VR$ is bounded by design, according to \cite{Ca1993} or 
\cite[Proposition 3.5]{CSS2021}, computing a roadmap of this deformation 
is enough to obtain a roadmap of $S$.
This is done using classical computation of critical loci of projections
and fibers of a projection to repair connectivity failures as described in 
e.g. \cite{Ca1988,Ca1993}.
Moreover we add fibers that pass through the points of $\Pcal$
to determine the \SACC of $S$ where they belong.

In Figure~\ref{fig:connectivity} we draw a roadmap of the projection
of $\VmCrit$ on the plane associated to $(c_2,s_2,c_3,s_3)$ that is
obtained through the above process. Indeed since the polynomial
$\Delta$ does not depend on $c_1$ nor $s_1$, we choose to restrict our
connectivity description on this projection, since extending it to the
whole space is immediate. Finally, since the projection of $\VR$ on
$(c_2,s_2,c_3,s_3)$ is two dimensional, we choose to plot instead the
angles $\theta_1, \theta_2$ that are, modulo $2\pi$, uniquely associated to the data
computed.

We choose here to represent only four inverse solutions of one point
of $\Pcal$ since one can find two cuspidal couples among
them. Indeed, looking at Figure~\ref{fig:connectivity}, one sees that
two dots are on a blue line, while the two others are on a green one.

\begin{figure}[H]
\hspace*{-0.3cm}\includegraphics[width = \linewidth]{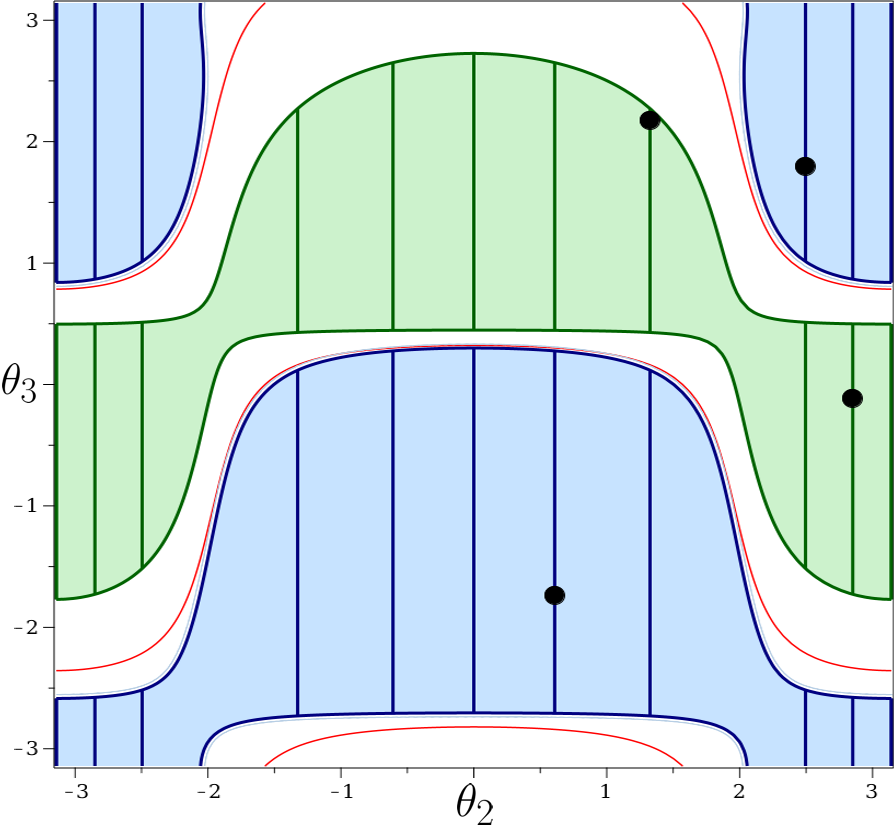}
\caption{The angles that are associated to the projection on the plane associated to $(c_2,s_2,c_3,s_3)$
of the sets under consideration.
The sets $S^+, S^-$ are represented as the areas in respectively green and blue 
while the red line represents the set $\V(\Delta)$.
Besides the black dots are the four inverse images of one sample point of $\Pcal$. 
Finally we represent the roadmap of the projection of $S^+\cup S^-$, containing these points,
as the union of the green and blue lines, which belong to respectively $S^+$ and $S^-$.}
\label{fig:connectivity}
\end{figure}

\bibliographystyle{abbrv}
\balance
\bibliography{biblio.bib}

\end{document}